\documentclass[a4paper,reqno]{amsart}
\usepackage{amsmath,amssymb,mathrsfs}
\usepackage{txfonts,bm,pifont}
\usepackage{cite}
\usepackage{float}
\usepackage{stackrel}
\usepackage{subcaption}
\captionsetup[subfigure]{labelfont=rm,format=hang}
\captionsetup[figure]{labelfont=rm,format=hang}

\usepackage{graphicx,xcolor}
\usepackage{comment}
\usepackage{enumitem}
\usepackage{longtable}
\usepackage{hyperref}

\newtheorem{theorem}{Theorem}[section]

\newtheorem{lemma}[theorem]{Lemma}
\newtheorem{remark}[theorem]{Remark}
\theoremstyle{definition}

\numberwithin{equation}{section}
\numberwithin{figure}{section}
\numberwithin{table}{section}
%
%

\newcommand{\wutilde}[1]{\vrule depth 0pt width 0pt%
{\raise0.8pt\hbox{$\smash{{\mathop{#1} \limits_{\displaystyle\widetilde{}}}}$}}}
\newcommand{\wuhat}[1]{\vrule depth 0pt width 0pt%
{\raise0.6pt\hbox{$\smash{{\mathop{#1} \limits_{\displaystyle\widehat{}}}}$}}}

\newcommand{\al}{\alpha}
\newcommand{\be}{\beta}

\newcommand{\la}{\lambda}

\newcommand{\ka}{\kappa}

\newcommand{\bbZ}{\mathbb{Z}}

\newcommand{\bbC}{\mathbb{C}}



%
%

%

%

%

%
\newcommand{\hT}{\hat{T}}
%
%
\makeatletter
\long\def\@makecaption#1#2{
 \vskip 10pt
 \setbox\@tempboxa\hbox{#1. #2}
 \ifdim \wd\@tempboxa >\hsize #1. #2\par \else \hbox
to\hsize{\hfil\box\@tempboxa\hfil}
 \fi}
\makeatother
%
%

%
%

%
%
%
%
\makeatletter
\@namedef{subjclassname@2020}{%
  \textup{2020} Mathematics Subject Classification}
\makeatother
\begin{document}
\allowdisplaybreaks

\title[]{Higher-order generalizations of the $A_6^{(1)}$- and $A_4^{(1)}$-surface type $q$-Painlev\'e equations}
\author{Nobutaka Nakazono}
\address{Institute of Engineering, Tokyo University of Agriculture and Technology, 2-24-16 Nakacho Koganei, Tokyo 184-8588, Japan.}
\email{nakazono@go.tuat.ac.jp}
\begin{abstract}
In this paper, we construct higher-order generalizations of the $A_6^{(1)}$- and $A_4^{(1)}$-surface type $q$-Painlev\'e equations from the system of partial difference equations with the consistency around a cube property by periodic reduction.
Moreover, we also show their extended affine Weyl group symmetries and Lax pairs.
\end{abstract}

\subjclass[2020]{
33E17, 
35Q53, 
37K10, 
39A13, 
39A14, 
39A23, 
39A36, 
39A45
}
\keywords{
discrete integrable systems;
$q$-Painlev\'e system;
integrable partial difference equation;
affine Weyl group;
Lax pair;
consistency around a cube property
}
\maketitle

\section{Introduction}\label{Introduction}
Fix an integer $N>0$.
This paper focuses on two $2N$-order ordinary $q$-difference equations.
One is the following:
\begin{equation}\label{eqn:intro_dP_even}
 \text{$q$P$^{(2N)}(A_6^{(1)})$ : \hspace{0.5em}}
 \overline{F}_i+\dfrac{1}{F_i}
 =\begin{cases}
  ~ \dfrac{1}{{a_i}^{2N}}\left(F_{i+1}+\dfrac{1}{\overline{F}_{i+1}}\right)
  &\text{if } i=1,\dots,2N-1,\\[1.5em]
  ~ \dfrac{\left(\displaystyle\prod_{k=1}^{2N-1}{a_k}^k\right)t}{\displaystyle\prod_{k=1}^{2N}F_k}
  &\text{if } i=2N,
 \end{cases}
\end{equation}
where $t\in\bbC$ is an independent variable, $F_i=F_i(t)\in\bbC$, $i=1,\dots,2N$, are dependent variables and $a_1,\dots,a_{2N-1}\in\bbC$ are parameters.
The symbol $\overline{\rule{0em}{0.5em}\hspace{0.4em}\,}$ denotes the discrete time evolution and
\begin{equation}
 \overline{t}=pt,\quad
 \overline{F}_i=F_i(pt),~i=1,\dots,2N,\quad
 \overline{a}_j=a_j,~j=1,\dots,2N-1,
\end{equation}
where $p\in\bbC$ is a constant parameter.
When $N=1$, the system \eqref{eqn:intro_dP_even} is equivalent to a $q$-Painlev\'e II equation of $A_6^{(1)}$-surface type\cite{RG1996:MR1399286,KTGR2000:MR1789477,RGTT2001:MR1838017,SakaiH2001:MR1882403}:
\begin{equation}\label{eqn:intro_dPII}
 \bigg(\overline{\overline{F}}_2\overline{F}_2+1\bigg)\bigg(\overline{F}_2F_2+1\bigg)
 =\dfrac{p{a_1}^3t^2\overline{F}_2}{t-a_1 \overline{F}_2}.
\end{equation}
Indeed, eliminating $F_1$ from the system \eqref{eqn:intro_dP_even} with $N=1$, we obtain Equation \eqref{eqn:intro_dPII}.

\begin{remark}
A different higher-order generalization of Equation \eqref{eqn:intro_dPII} has been reported in \cite{Okubo2017:arxiv1704.05403,MOT2021:Cluster}, but the relationship between that system and the system \eqref{eqn:intro_dP_even} is unclear.
\end{remark}

The other $2N$-order ordinary $q$-difference equation is the following:
\begin{equation}\label{eqn:intro_dP_odd}
 \text{$q$P$^{(2N)}(A_4^{(1)})$ :\hspace{0.5em}}
 \dfrac{{a_i}^{2N+1}\Big(\overline{G}_iG_i-1\Big)}{{a_i}^{2N+1}-c^{2(-1)^i}G_i}
 =\begin{cases}
 ~ \dfrac{\overline{G}_{i+1}G_{i+1}-1}{1-{a_{i+1}}^{2N+1}c^{2(-1)^i}\overline{G}_{i+1}}
 &\text{if } i=1,\dots,2N-1,\\[1.5em]
 ~ \dfrac{\left(\displaystyle\prod_{k=1}^{2N}{a_k}^k\right)ct}{\displaystyle\prod_{k=1}^N G_{2k-1}}&\text{if } i=2N,
 \end{cases}
\end{equation}
where $t$ is an independent variable, $G_i=G_i(t)\in\bbC$, $i=1,\dots,2N$, are dependent variables and $a_1,\dots,a_{2N},c\in \bbC$ are parameters.
The symbol $\overline{\rule{0em}{0.5em}\hspace{0.4em}\,}$ denotes the discrete time evolution and
\begin{equation}
 \overline{t}=pt,\quad
 \overline{G}_i=G_i(pt),~i=1,\dots,2N,\quad
 \overline{a}_j=a_j,~j=1,\dots,2N,\quad
 \overline{c}=c^{-1},
\end{equation}
where $p\in\bbC$ is a constant parameter.
When $N=1$, by using the shift operator $~\widetilde{\rule{0em}{0.5em}}~$ and the variables $g_1=g_1(t)$ and $g_2=g_2(t)$ given by
\begin{equation}
 \widetilde{\rule{0em}{0.7em}}~=\overline{\overline{\rule{0em}{0.5em}~\,~\,}},\quad
 g_1=\dfrac{c^3(t+p{a_1}^2a_2c\,G_2)}{p{a_1}^2a_2G_1},\quad
 g_2=\dfrac{a_1{a_2}^2ct+G_1}{c^4\,G_2},
\end{equation}
the system \eqref{eqn:intro_dP_odd} can be rewritten as
\begin{subequations}\label{eqn:intro_dPV}
\begin{align}
 &\Big(\widetilde{g}_1g_2-1\Big)\Big(g_1 g_2-1\Big)
 =\dfrac{t^2\Big(g_2+p{a_1}^3{a_2}^3c^{-2}\Big)\Big(p{a_1}^3{a_2}^3c^2g_1+1\Big)}{p{a_1}^4a_2\Big(a_2g_2-a_1c^{-1}t\Big)},\\
 &\Big(\widetilde{g}_2\widetilde{g}_1-1\Big)\Big(g_2 \widetilde{g}_1-1\Big)
 =\dfrac{p^2t^2\Big(\widetilde{g}_1+p{a_1}^3{a_2}^3c^2\Big)\Big(p{a_1}^3{a_2}^3c^{-2}\,\widetilde{g}_1+1\Big)}{p{a_1}^4a_2\Big(a_2\widetilde{g}_1-pa_1ct\Big)},
\end{align}
\end{subequations}
where
\begin{equation}
 \widetilde{t}=p^2t,\quad
 \widetilde{g}_1=g_1(p^2t),\quad
 \widetilde{g}_2=g_2(p^2t),\quad
 \widetilde{a}_1=a_1,\quad
 \widetilde{a}_2=a_2,\quad
 \widetilde{c}=c.
\end{equation}
The system \eqref{eqn:intro_dPV} is known as a $q$-Painlev\'e V equation of $A_4^{(1)}$-surface type \cite{TGCR2004:MR2058894,JNS2016:MR3584386,KN2015:MR3340349,NakazonoN2016:MR3503803} and can be regarded as a B\"acklund transformation of the well-known Sakai's $q$-Painlev\'e V equation of $A_4^{(1)}$-surface type\cite{SakaiH2001:MR1882403}.

\begin{remark}
For a $q$-difference equation, it is common to use the symbol ``$q$" for its shift parameter. 
However, in this paper, the symbol ``$q$" is used in the later arguments, and the relations between the symbol ``$p$" in the systems \eqref{eqn:intro_dP_even} and \eqref{eqn:intro_dP_odd} and the symbol ``$q$" in the later arguments will be given.
Therefore, we use the symbol ``$p$" instead of the symbol ``$q$" for the systems \eqref{eqn:intro_dP_even} and \eqref{eqn:intro_dP_odd} to avoid confusion.
\end{remark}

\subsection{Main results}\label{subsection:main}
In this subsection, we show four main results of this paper.
The first two results are about symmetries, while the latter two are about Lax pairs, known as showing integrability.

\begin{theorem}\label{theorem:symmetry_n=even}
The system \eqref{eqn:intro_dP_even} can be derived from a birational representation of an extended affine Weyl group of $(A_1\times A_1)^{(1)}$-type when $N=1$ and that of $(A_{2N-1}\rtimes A_1)^{(1)}$-type when $N\in\bbZ_{\geq2}$.
\end{theorem}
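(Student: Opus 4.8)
The plan is to construct an explicit birational representation of the extended affine Weyl group $\tW=\tW\big((A_{2N-1}\rtimes A_1)^{(1)}\big)$ (and of $\tW\big((A_1\times A_1)^{(1)}\big)$ in the degenerate case $N=1$) on the field of rational functions in the dependent variables $F_1,\dots,F_{2N}$ and the multiplicative parameters $a_1,\dots,a_{2N-1}$, and then to realise the discrete time evolution $\overline{\phantom{x}}$ of the system \eqref{eqn:intro_dP_even} as the action of a distinguished translation element of $\tW$. Because the setting is multiplicative ($q$-difference), I would take the simple reflections $s_0,\dots,s_{2N-1}$ attached to the $2N$ nodes of the affine $A_{2N-1}^{(1)}$ Dynkin diagram to act on the $a_j$ by the standard pattern $s_j(a_j)=a_j^{-1}$, $s_j(a_{j\pm1})=a_ja_{j\pm1}$, leaving the remaining parameters fixed, and on the $F_i$ by accompanying fractional-linear (Riccati-type) substitutions; the extra $A_1$-factor would be generated by one further reflection, with the affine and extended structure supplied by the Dynkin-diagram rotation $\pi\colon i\mapsto i+1$ (mod $2N$) together with the reversal automorphism responsible for the semidirect product $\rtimes$.

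With the generators in hand, the first block of work is purely group-theoretic: verify that the maps satisfy the defining relations of $\tW$. Concretely, I would check that each $s_j$ is an involution, that adjacent reflections obey the braid relation $s_js_{j+1}s_j=s_{j+1}s_js_{j+1}$ while non-adjacent ones commute, that $\pi s_j\pi^{-1}=s_{j+1}$, and that the reversal automorphism conjugates the $A_{2N-1}$-reflections by the diagram flip in the manner demanded by the semidirect product. These are finite rational-function identities that can be confirmed directly. The $N=1$ reduction must be treated separately, since there the $A_{2N-1}$ part collapses to a single $A_1$ and the rotation/reversal data degenerate, so that the group becomes the direct product $(A_1\times A_1)^{(1)}$ rather than a genuine semidirect product.

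The second block is dynamical: identify the time evolution with a translation. In $\tW(A_{2N-1}^{(1)})$ the translation subgroup is generated by words of the form $T=\pi\,s_{2N-1}s_{2N-2}\cdots s_1$, and I would compute the image $\overline{F}_i=T(F_i)$ by composing the individual generator actions. The parameters should transform as $\overline{a}_j=a_j$ and $\overline{t}=pt$, with the shift parameter $p$ emerging from the product of the $a_k$; matching the multiplicity of each $a_k$ then reproduces the weight $\prod_{k=1}^{2N-1}a_k^{\,k}$ appearing in the $i=2N$ line of \eqref{eqn:intro_dP_even}. The goal is to verify that the composed action yields exactly the $2N$ relations of \eqref{eqn:intro_dP_even}, including the boundary equation carrying $\prod_{k=1}^{2N}F_k$ in its denominator.

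I expect the principal obstacle to be pinning down the semidirect interaction between the $A_{2N-1}$ and the $A_1$ factors, and showing that the chosen translation acts as a single clean shift on every $F_i$ simultaneously: the fractional-linear pieces interact, and proving that their composition telescopes into the stated evolution rather than a more complicated rational map is the delicate point. A cleaner route, and the one I would ultimately pursue in line with the construction announced in the abstract, is to obtain the representation by periodic reduction of a system of partial difference equations possessing the consistency-around-a-cube property: there the affine Weyl symmetry and its defining relations are inherited automatically from the symmetry of the lattice equation and the cube consistency, so that the braid-relation verification above is replaced by invoking the consistency of the \PDE, and the evolution \eqref{eqn:intro_dP_even} drops out as the $N$-periodic reduction of the lattice equation.
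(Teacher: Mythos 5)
Your closing pivot to periodic reduction of a CAC system is indeed the paper's strategy, but the main route you lay out contains a concrete error that would derail the proof: the time evolution of \eqref{eqn:intro_dP_even} is \emph{not} the translation $T=\pi s_{2N-1}s_{2N-2}\cdots s_1$. In the paper that word is precisely $T_1$ of \eqref{eqn:Ti_Weyl}, and its action on the parameters is $T_1(a_0)=qa_0$, $T_1(a_1)=q^{-1}a_1$, $T_1(b)=qb$; this is incompatible with the system \eqref{eqn:intro_dP_even}, which demands $\overline{a}_j=a_j$ for every $j$ while only $t$ moves. The element that actually produces \eqref{eqn:intro_dP_even} is $\hT_0=\pi^{-2N}$ (and $\pi^{-2}$ when $N=1$), a pure power of the diagram rotation: it fixes all $a_j$ and sends $b\mapsto q^{-2N}b$, matching the identification $t=-b$, $p=q^{-2N}$ of \eqref{eqn:correspondence_introEqn_n=even}. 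This is not a pedantic distinction --- the paper stresses (Remark \ref{remark:Ti_U}) that the lattice shift $T_0$ in the $l_0$-direction is \emph{not} expressible in $\tW((A_{2N-1}\rtimes A_1)^{(1)})$ at all (it moves $\ka(l)$), and that $\hT_0$ merely coincides with it on the reduced parameters and $f$ variables; it also remarks explicitly that the Painlev\'e dynamics here comes from an infinite-order element that need not be a translation. Had you carried out your ``second block'' with $T=\pi s_{2N-1}\cdots s_1$, the parameter bookkeeping would have forced a contradiction rather than telescoped into \eqref{eqn:intro_dP_even}.

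Two further points are underspecified in the fallback route. First, the periodic condition matters: one must impose the $(1,1,\dots,1)$-condition \emph{including the $l_0$-direction}, $U(l_1+1,\dots,l_{2N}+1,l_0+1)=U(l_1,\dots,l_{2N},l_0)$ as in \eqref{eqn:period_U_even}; the alternative condition that omits the $l_0$-shift leads instead toward KNY's representation, which (by the cluster-algebra result cited in the paper) does not produce $A_6^{(1)}$-surface type equations at all, so the distinction is essential to the theorem. Second, even after the reduction one still needs the correct dictionary: the reduced variables are ratios $f_j=\omega_{j-1}/\omega_j$ of the $U$ values along a lattice staircase, the constraint $\prod_{i=0}^{2N-1}a_i=q$ ties the $a_j$ to $q$, and one must verify (as the paper does in Lemma \ref{lemma:n=even_f_Weyl} and the lemma following it) that the induced birational action satisfies the relations \eqref{eqns:fundamental_An-1A1} and that $\hT_0$ acting on the $f_j$ reproduces every line of \eqref{eqn:intro_dP_even}, including the boundary equation at $i=2N$. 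Without naming $\hT_0=\pi^{-2N}$ as the evolution, the proposal stops short of the statement to be proved.
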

\begin{proof}
This theorem is proven in \S \ref{subsection:reduction_n=2} when $N=1$ and in \S \ref{subsection:reduction_n=even} when $N\in\bbZ_{\geq2}$.
\end{proof}

\begin{theorem}\label{theorem:symmetry_n=odd}
The system \eqref{eqn:intro_dP_odd} can be derived from a birational representation of an extended affine Weyl group of $(A_{2N}\rtimes A_1)^{(1)}$-type.
\end{theorem}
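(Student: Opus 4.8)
The plan is to mirror the construction used for the even case in Theorem~\ref{theorem:symmetry_n=even}: exhibit an explicit birational representation of $\widetilde{W}((A_{2N}\rtimes A_1)^{(1)})$ --- the one that descends from the consistency-around-a-cube lattice system under the $N$-periodic reduction --- on a suitable space of $\tau$-functions (equivalently, on the multiplicative variables carrying the parameters $a_1,\dots,a_{2N},c$ together with dependent coordinates), locate inside this group the element realizing the discrete flow $\overline{\phantom{x}}$, and verify that its action reproduces \eqref{eqn:intro_dP_odd} term by term.

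First I would fix the data of the representation. The $A_{2N}^{(1)}$ factor supplies simple reflections acting on the cyclically arranged multiplicative root variables built from $a_1,\dots,a_{2N}$; the $A_1^{(1)}$ factor supplies an additional reflection; and the ``extended'' part is generated by the Dynkin-diagram automorphisms, namely the rotation $\pi$ of the $A_{2N}^{(1)}$ cycle and the involution that flips the $A_1$ node and sends $c\mapsto c^{-1}$. The semidirect product $\rtimes$ records how this $A_1$ involution intertwines with the cyclic rotation, and pinning down the corresponding braid and commutation relations --- so that the generators really present $\widetilde{W}((A_{2N}\rtimes A_1)^{(1)})$ and not merely a direct product --- is the first thing to establish.

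Next I would identify the time evolution. The prescribed shift $(t,a_j,c)\mapsto(pt,a_j,c^{-1})$ shows that $\overline{\phantom{x}}$ cannot be a pure translation, since it inverts $c$; rather it must be a translation in the weight lattice composed with the $c$-inverting automorphism, which is precisely the ``half-translation composed with a diagram involution'' element that the $\rtimes A_1$ extension is designed to produce. I would write this element as an explicit word in the reflections and automorphisms, check directly that it acts on the parameters as prescribed (recovering $\overline t=pt$ and $\overline c=c^{-1}$), and read off its action on the dependent variables $G_i$ from the birational formulas.

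The decisive and most delicate step is verifying that the resulting relations between the $G_i$ and $\overline G_i$ coincide with \eqref{eqn:intro_dP_odd}. The bulk equations $i=1,\dots,2N-1$ should drop out as the images of a single local reflection relation under the cyclic rotation, so the real work is the closing relation at $i=2N$, where the cyclic product $\prod_{k=1}^{N}G_{2k-1}$ and the combination $\bigl(\prod_{k}a_k^{\,k}\bigr)ct$ must emerge from the accumulated action around the full $A_{2N}^{(1)}$ cycle; the alternating factors $c^{2(-1)^i}$ likewise have to be traced through the repeated application of the $c$-inverting involution. Keeping this twisting by the $A_1$ involution consistent across the whole chain is where I expect the main obstacle to lie. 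As a final consistency check I would specialize to $N=1$ and confirm that eliminating variables reproduces \eqref{eqn:intro_dPV}, the known $A_4^{(1)}$-surface $q$-Painlev\'e V equation, which simultaneously fixes any normalization ambiguity in the representation.
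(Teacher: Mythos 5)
Your proposal follows essentially the same route as the paper: the birational representation you invoke is exactly the one obtained by the $(1,1,\dots,1)$-periodic reduction of the CAC system (Lemma \ref{lemma:n=odd_f_Weyl}), your ``translation composed with the $c$-inverting diagram involution'' is precisely the paper's time evolution $\hT_0=\pi^{-2N-1}$ (an odd power of the rotation $\pi$, hence $\overline{c}=c^{-1}$), and the paper likewise derives the bulk equations $i=1,\dots,2N-1$ by applying powers of $\pi$ to a single local relation, with the closing $i=2N$ relation requiring the separate computation you anticipate. The identification $\overline{\phantom{x}}=\hT_0$, $G_i=f_i$, $t=b$, $p=q^{-2N-1}$ in \eqref{eqn:correspondence_introEqn_n=odd} then completes the proof, just as you outline.
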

\begin{proof}
This theorem is proven in \S \ref{subsection:reduction_n=odd}.
\end{proof}

\begin{remark}
An affine Weyl group of $(A_1\times A_1)^{(1)}$-type means the direct product of two affine Weyl groups of $A_1^{(1)}$-type, 
while an affine Weyl group of $(A_k\rtimes A_1)^{(1)}$-type means the semi-direct product of an affine Weyl group of $A_k^{(1)}$-type and that of $A_1^{(1)}$-type.
Moreover, the extended affine Weyl group of $(A_1\times A_1)^{(1)}$-type in Theorem \ref{theorem:symmetry_n=even} indicates an affine Weyl group of $(A_1\times A_1)^{(1)}$-type extended by an automorphism of Dynkin diagrams.
On the other hand, the extended affine Weyl group of $(A_k\rtimes A_1)^{(1)}$-type in Theorems \ref{theorem:symmetry_n=even} and \ref{theorem:symmetry_n=odd} indicates an affine Weyl group of $(A_k\rtimes A_1)^{(1)}$-type extended by an automorphism of Dynkin diagrams.
(Further details can be found in Lemma \ref{lemma:lattice_Weyl_2}.)

For details on the birational representations of the extended affine Weyl groups, see
\S \ref{subsection:reduction_n=2} for the system \eqref{eqn:intro_dP_even} with $N=1$,
\S \ref{subsection:reduction_n=even} for the system \eqref{eqn:intro_dP_even} with $N\in\bbZ_{\geq2}$,
and \S \ref{subsection:reduction_n=odd} for the system \eqref{eqn:intro_dP_odd}.
\end{remark}

\begin{theorem}\label{theorem:lax_n=even}
The following pair of linear equations is a Lax pair of the system \eqref{eqn:intro_dP_even}$:$
\begin{subequations}
\begin{align}
 &\Phi(px,t)
 =\begin{pmatrix}
 \dfrac{t x}{~\displaystyle\prod_{k=1}^{2N}F_k~}&-1\\1&0
 \end{pmatrix}
 L_{2N}\dots L_1
 \Phi(x,t),\\
 &\Phi(x,pt)
 =\begin{pmatrix}
 \dfrac{t x}{~\displaystyle\prod_{k=1}^{2N}\overline{F}_k~}&-1\\1&0
 \end{pmatrix}
 \Phi(x,t),
\end{align}
\end{subequations}
where $\Phi(x,t)$ is a column vector of length two and $L_i=L_i(x,t)$, $i=1,\dots,2N$, are $2\times 2$ matrices given by
\begin{equation}
 L_i=\begin{pmatrix}
 \dfrac{\left(\displaystyle\prod_{k=1}^{2N-i}{a_{2N-k}}^k\right)xF_i}{\displaystyle\prod_{k=1}^{i-1}{a_k}^k}&-1\\
 1&-\dfrac{\left(\displaystyle\prod_{k=1}^{2N-i}{a_{2N-k}}^k\right)x}{\left(\displaystyle\prod_{k=1}^{i-1}{a_k}^k\right)F_i}
 \end{pmatrix}.
\end{equation}
\end{theorem}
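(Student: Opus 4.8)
The plan is to read the statement as a compatibility (zero-curvature) condition for the given pair of linear $q$-difference systems and to show that this condition is equivalent to \eqref{eqn:intro_dP_even}. Write the two linear equations as $\Phi(px,t)=A\,\Phi(x,t)$ and $\Phi(x,pt)=B\,\Phi(x,t)$, where $A=M_0\,L_{2N}\cdots L_1$ with $M_0$ the first matrix displayed in the theorem and $B$ the second one. Evaluating $\Phi(px,pt)$ in the two possible orders gives the consistency relation $\overline{A}\,B=(T_xB)\,A$, where the overline denotes the $t$-shift $t\mapsto pt$ (with $F_i\mapsto\overline{F}_i$) and $T_x$ denotes the spectral shift $x\mapsto px$. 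Thus the whole theorem reduces to establishing this single matrix identity and identifying its scalar content with the system.

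First I would exploit the shape of the boundary matrices. A one-line computation shows that $T_xB=\overline{M_0}$, which is exactly the leading factor of $\overline{A}=\overline{M_0}\,\overline{L_{2N}}\cdots\overline{L_1}$; since $\det M_0=1$, this common factor cancels on the left and the consistency relation collapses to the reduced identity
\[
\overline{L_{2N}}\cdots\overline{L_1}\,B=M_0\,L_{2N}\cdots L_1 .
\]
Here $\overline{L_i}$ differs from $L_i$ only by $F_i\mapsto\overline{F}_i$, because the $L_i$ carry no explicit $t$ and the $a_k$ are invariant. Both sides are now matrix polynomials in $x$ of degree $2N+1$, so the claim is the matching of their coefficients.

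To prove the reduced identity efficiently — mirroring the periodic reduction by which \eqref{eqn:intro_dP_even} is itself obtained — I would factor it into local relations along the staircase. Set $T_1=B$ and define recursively $T_{i+1}=\overline{L_i}\,T_i\,L_i^{-1}$ for $i=1,\dots,2N$; by telescoping, the reduced identity is then equivalent to the single closing requirement $T_{2N+1}=M_0$. The key point is that requiring each $T_{i+1}$ to remain of the boundary form $\begin{pmatrix}\ast&-1\\1&0\end{pmatrix}$ — in particular to be free of the spurious pole in $x$ introduced by $L_i^{-1}$, whose denominator $\det L_i$ depends on $x$ alone — is exactly the $i$-th equation of \eqref{eqn:intro_dP_even} in the generic case $i=1,\dots,2N-1$. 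Running the recursion and reading off the scalar update at each step therefore produces those equations one by one, and the same computation run in reverse gives the converse implication.

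The main obstacle is the wrap-around, cyclic relation at $i=2N$. The generic relations only couple the neighbours $F_i,\overline{F}_i,F_{i+1},\overline{F}_{i+1}$, whereas the closing condition $T_{2N+1}=M_0$ must reproduce the global constraint involving $\prod_{k=1}^{2N}F_k$ and the explicit $t$ in the last line of \eqref{eqn:intro_dP_even}. I would therefore handle $i=1,\dots,2N-1$ uniformly and treat the period closure separately, carefully tracking the telescoping parameter products $\prod_k a_k^k$ and $\prod_k a_{2N-k}^k$ in the prefactors of the $L_i$ so that the accumulated factors collapse to the coefficient $\big(\prod_{k=1}^{2N-1}a_k^k\big)t/\prod_{k=1}^{2N}F_k$. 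Checking that this parameter bookkeeping, the normalisations $\det L_i$, and the spectral shift $x\mapsto px$ hidden in the first cancellation are mutually consistent around the full period is the only genuinely delicate part of the verification.
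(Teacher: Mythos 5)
Your reduction of the compatibility condition is sound: indeed $T_xB=\overline{M_0}$, $\det M_0=1$, so the problem collapses to the polynomial identity $\overline{L_{2N}}\cdots\overline{L_1}\,B=M_0\,L_{2N}\cdots L_1$, and your conjugation recursion $T_{i+1}=\overline{L_i}\,T_i\,L_i^{-1}$ with closure $T_{2N+1}=M_0$ restates it faithfully. The gap is in your identification of the scalar content of the recursion. Write $\rho_i=\bigl(\prod_{k=1}^{2N-i}{a_{2N-k}}^k\bigr)\big/\bigl(\prod_{k=1}^{i-1}{a_k}^k\bigr)$ for the coefficient appearing in $L_i$, and suppose $T_i=\bigl(\begin{smallmatrix}c_ix&-1\\1&0\end{smallmatrix}\bigr)$. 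A direct computation shows that $T_{i+1}$ is again of boundary form if and only if the \emph{single} scalar relation $c_i=\rho_i\bigl(F_i+1/\overline{F}_i\bigr)$ holds (all three non-corner entries impose this same relation), in which case $c_{i+1}=\rho_i\bigl(\overline{F}_i+1/F_i\bigr)$. Since $T_1=B$ forces $c_1=t/\prod_k\overline{F}_k$, the condition at the \emph{first} step is
\begin{equation*}
 \frac{t}{\displaystyle\prod_{k=1}^{2N}\overline{F}_k}
 =\left(\prod_{j=1}^{2N-1}{a_j}^{2N-j}\right)\left(F_1+\frac{1}{\overline{F}_1}\right),
\end{equation*}
which is \emph{not} the first equation of \eqref{eqn:intro_dP_even}: it involves $t$ and all $2N$ barred variables, contradicting your claim that the generic conditions ``only couple the neighbours.'' The equations of the system appear shifted by one: pole-freeness of $T_{i+1}$ for $i=2,\dots,2N$ yields equation $i-1$ (via $\rho_{i-1}(\overline{F}_{i-1}+1/F_{i-1})=\rho_i(F_i+1/\overline{F}_i)$ and $\rho_{i-1}/\rho_i={a_{i-1}}^{2N}$), and the final identification $c_{2N+1}=t/\prod_kF_k$ yields equation $2N$. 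Thus the compatibility is equivalent to the system \emph{together with} the displayed auxiliary relation: $2N+1$ scalar conditions, one of which must be proved redundant.

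That redundancy is exactly the missing step. For the direction ``system $\Rightarrow$ compatibility'' you must derive the auxiliary relation from \eqref{eqn:intro_dP_even}, and this is not ``the same computation run in reverse'' but a global argument around the whole period: iterating the identity $F_i+1/\overline{F}_i=\tfrac{F_i}{\overline{F}_i}\bigl(\overline{F}_i+1/F_i\bigr)$ through all $2N$ equations produces the telescoping product $\prod_k(F_k/\overline{F}_k)\prod_k a_k^{k-2N}$ that turns equation $2N$ into the auxiliary relation. This is routine once identified, but it is a genuine step, not bookkeeping, and your plan as written would stall at $i=1$ when the expected equation fails to appear. For contrast, the paper never verifies the matrix identity at all: it constructs $A$ and $B$ as the matrices of the shifts $T_x=T_1\cdots T_{2N}T_0$ and $T_0$ obtained by periodic reduction of the Lax representation \eqref{eqns:Weyl_phi}, so compatibility is inherited from the commutativity of the lattice shifts (multidimensional consistency), and the identification with \eqref{eqn:intro_dP_even} is simply the already-established action \eqref{eqn:dP_even} of $T_0$ on the $f$ variables together with the correspondence \eqref{eqn:correspondence_introEqn_n=even}. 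Your direct-verification route is legitimately different and self-contained, but it needs the correction above to go through.
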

\begin{proof}
This theorem is proven in \S \ref{subsection:proof_Lax_n=2} when $N=1$ and in \S \ref{subsection:proof_Lax_n=even} when $N\in\bbZ_{\geq2}$.
\end{proof}

\begin{theorem}\label{theorem:lax_n=odd}
The following pair of linear equations is a Lax pair of the system \eqref{eqn:intro_dP_odd}$:$
\begin{subequations}
\begin{align}
 &\Psi(px,t)
 =\begin{pmatrix}\dfrac{1}{c}&0\\0&\displaystyle\prod_{k=1}^NG_{2k-1}\end{pmatrix}
 M_{2N+1}K_{2N}M_{2N}\dots K_{1}M_{1}
 \begin{pmatrix}
 p^{-1}ctx&\dfrac{1}{c}\\
 c^2\left(\displaystyle\prod_{k=1}^NG_{2k}\right)&0
 \end{pmatrix}
 \Psi(x,t),\\
 &\Psi(x,pt)
 =\begin{pmatrix}
 \dfrac{1}{c}&0\\
 0&\dfrac{1}{c^4}\left(\displaystyle\prod_{k=1}^N\overline{G}_{2k}\right)+\dfrac{t}{c^3}\left(\displaystyle\prod_{k=1}^{2N}{a_k}^{k-2N-1}\right)
 \end{pmatrix}
 \begin{pmatrix}
 p^{-1}ctx&\dfrac{1}{c}\\
 c^2\left(\displaystyle\prod_{k=1}^NG_{2k}\right)&0
 \end{pmatrix}
 \Psi(x,t),
 \end{align}
\end{subequations}
where $\Psi(x,t)$ is a column vector of length two and $M_i=M_i(x,t)$, $i=1,\dots,2N+1$, and $K_j=K_j(x,t)$, $j=1,\dots,2N$, are $2\times 2$ matrices given by
\begin{equation}
 M_i=\begin{pmatrix}
 \dfrac{\displaystyle\prod_{k=i}^{2N}{a_k}^{2N+1}}{\displaystyle\prod_{k=1}^{2N}{a_k}^k}x&1\\
 1&\dfrac{\displaystyle\prod_{k=i}^{2N}{a_k}^{2N+1}}{\displaystyle\prod_{k=1}^{2N}{a_k}^k}x
 \end{pmatrix},\quad
 K_j=\begin{pmatrix}c^{2(-1)^j}&0\\0&-\dfrac{1}{G_j}\end{pmatrix}.
\end{equation}
\end{theorem}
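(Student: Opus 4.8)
The plan is to verify the stated pair directly through the zero-curvature condition. Writing the spectral equation as $\Psi(px,t)=A(x,t)\Psi(x,t)$ and the deformation equation as $\Psi(x,pt)=B(x,t)\Psi(x,t)$, where $A$ and $B$ are the two matrices in the statement, computing $\Psi(px,pt)$ in the two possible orders gives the compatibility identity
\begin{equation*}
 A(x,pt)\,B(x,t)=B(px,t)\,A(x,t),
\end{equation*}
in which the time shift $t\mapsto pt$ acts on all $t$-dependent data, sending $c\mapsto c^{-1}$ and $G_i\mapsto\overline{G}_i$ while fixing the $a_j$. The goal is to show that this identity holds exactly when the $G_i$ satisfy \eqref{eqn:intro_dP_odd}.

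The first simplification is structural. Both $A$ and $B$ share the same rightmost factor, the matrix $R(x,t)$ appearing as the rightmost factor in both linear equations of the statement. Writing $A=D_A\,\Pi\,R$ with $\Pi=M_{2N+1}K_{2N}M_{2N}\dots K_1M_1$, letting $D_A$, $D_B$ denote the leftmost (diagonal) matrices of $A$ and $B$, and using that $R$ is generically invertible ($\det R=-c\prod_{k=1}^N G_{2k}$), the common factor $R(x,t)$ cancels from the right of both sides. The compatibility condition thus reduces to
\begin{equation*}
 \overline{D_A}\;\overline{\Pi}(x)\;\overline{R}(x)\,D_B
 =D_B\,R(px,t)\,D_A\;\Pi(x),
\end{equation*}
where the bar denotes the $t\mapsto pt$ image. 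Since the $M_i$ are linear in $x$ and the $K_j$ are $x$-independent, both sides are matrix-valued polynomials in $x$ of equal degree, and equating the coefficients of each power of $x$ turns the single matrix identity into a finite system of scalar relations among $G_i,\overline{G}_i,a_j,c,t$.

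Rather than expand $\Pi$ in closed form, I would process it block by block; this block structure reflects the periodic reduction from the underlying consistency-around-a-cube system from which the pair originates. Each pair $K_jM_j$ is a single transfer matrix whose entries involve only $G_j$, $a_j$ and $c^{2(-1)^j}$, so introducing the partial products $\Pi_j=K_jM_j\dots K_1M_1$ and tracking the intermediate vectors they generate yields a two-step recursion whose consistency between consecutive indices should reproduce the edge relations of \eqref{eqn:intro_dP_odd} for $i=1,\dots,2N-1$. The remaining constraint (the case $i=2N$), which ties together $\prod_{k=1}^N G_{2k-1}$, should arise not from a single block but from the global closure of the product---matching the diagonal conjugators $D_A$, $D_B$ against the outermost factors $M_{2N+1}$ and $R$---and hence from the extreme powers of $x$.

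The main obstacle will be carrying this recursion uniformly in $N$ while respecting the unusual $c\mapsto c^{-1}$ action of the time shift. Under the bar, $K_j$ transforms by $c^{2(-1)^j}\mapsto c^{-2(-1)^j}$, so the parity of $j$ is coupled to the inversion of $c$; making the parity-dependent right-hand sides of \eqref{eqn:intro_dP_odd} emerge correctly from this coupling is the delicate point. Most of the effort lies in identifying the invariant combinations---the factors $a_i^{2N+1}-c^{2(-1)^i}G_i$ and the products $\overline{G}_iG_i-1$ already visible in the system---that propagate cleanly through each block. As a consistency check, setting $N=1$ and eliminating variables should recover the $A_4^{(1)}$-surface $q$-Painlev\'e V system \eqref{eqn:intro_dPV}, whose Lax pair is standard.
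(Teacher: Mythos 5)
Your setup is sound, and it is a genuinely different route from the paper's. The compatibility condition $A(x,pt)B(x,t)=B(px,t)A(x,t)$ is the right criterion, the cancellation of the shared rightmost factor $R(x,t)$ from both sides is legitimate (your determinant computation $\det R=-c\prod_{k=1}^{N}G_{2k}$ confirms generic invertibility), and you correctly track that the time shift inverts $c$ and fixes the $a_j$. By contrast, the paper never verifies any matrix identity by hand: both linear equations are \emph{derived} by periodically reducing the Lax representation \eqref{eqns:Weyl_phi} of the CAC system \eqref{eqns:ABS_U}, with the spectral shift realized as the lattice translation $T_x=T_0T_1\cdots T_{2N+1}$ and the deformation as $T_0$. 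Compatibility is then inherited from the multidimensional consistency of the underlying linear system, and the action of $T_0$ on the reduced variables is, by the results of \S\ref{subsection:reduction_n=odd}, exactly the system \eqref{eqn:dP_odd}; the theorem follows from a change of notation. What your approach would buy, if completed, is a self-contained verification independent of the CAC machinery; what the paper's approach buys is that the hard computation never has to be done.

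However, as a proof your proposal has a genuine gap: everything after the cancellation step is a plan, not an argument. The assertions that the block recursion through $\Pi_j=K_jM_j\cdots K_1M_1$ ``should reproduce the edge relations'' for $i=1,\dots,2N-1$, and that the $i=2N$ relation ``should arise from the global closure,'' are precisely the mathematical content of the theorem, and nothing in the proposal establishes them. Concretely, after equating powers of $x$ in the reduced identity, both sides are matrix polynomials of degree $2N+2$, so you face up to $4(2N+3)$ scalar equations in the $2N$ unknowns $\overline{G}_1,\dots,\overline{G}_{2N}$; one must actually show that this heavily overdetermined system is satisfied exactly when \eqref{eqn:intro_dP_odd} holds, with no extraneous constraints on $a_j$, $c$, $t$ emerging. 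You yourself flag the parity-coupled action $c\mapsto c^{-1}$ (under which $K_j\mapsto K_j|_{c^{2(-1)^j}\to c^{-2(-1)^j}}$) as ``the delicate point'' and defer it, and the recursion is never exhibited, let alone closed uniformly in $N$. Until that is done, the proposal describes a computation that would prove the statement if successful, but it does not prove it.
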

\begin{proof}
This theorem is proven in \S \ref{subsection:proof_Lax_n=odd}.
\end{proof}

\subsection{Background}\label{subsection:Background}
In the early 20th century, to find a new class of special functions, Painlev\'e and Gambier classified all the ordinary differential equations of the type $y''=F(y',y,t)$, where $y=y(t)$, $'=d/dt$ and $F$ is a function meromorphic in $t$ and rational in $y$ and $y'$, 
with the Painlev\'e property (solutions do not have movable singularities other than poles) \cite{PainleveP1902:MR1554937,GambierB1910:MR1555055}.
As a result, they obtained six new equations, and the resulting equations are now collectively referred to as the Painlev\'e equations.
Note that the Painlev\'e VI equation was found by Fuchs \cite{FuchsR1905:quelques} before Painlev\'e and Gambier.

After the discovery, the Painlev\'e equations withdrew from the stage of ``modern mathematics" for a while.
The Painlev\'e equations regained attention after the 1970s because they appeared in mathematical physics research.
For instance, solutions of the Painlev\'e equations (Painlev\'e transcendents) were rediscovered as scaling functions for the two-dimensional Ising model on a square lattice \cite{WMTB1976:PhysRevB.13.316} and as similarity solutions of the soliton equations describing specific shallow water waves (solitons) \cite{AS1977:PhysRevLett.38.1103}.

$q$-Painlev\'e equations are a family of second-order nonlinear ordinary $q$-difference equations.
Historically, they have been obtained as $q$-discrete analogues of the Painlev\'e equations (see, for example, \cite{GRP1991:MR1125950,JS1996:MR1403067}).
Similar to the Painlev\'e equations, $q$-Painlev\'e equations are also known to describe special solutions of various discrete soliton equations \cite{GRSWC2005:MR2117991,HHJN2007:MR2303490,FJN2008:MR2425981,OrmerodCM2012:MR2997166,
HHNS2015:MR3317164}.
Particularly famous is imposing periodic conditions on soliton equations, which are closely related to the result in this paper.
It is also known that $q$-Painlev\'e equations have the same good properties as the Painlev\'e equations, such as having Lax pairs and (extended) affine Weyl group symmetries.
(See for example \cite{KNY2017:MR3609039,HJN2016:MR3587455}.)

Let us explain more about a Lax pair and an (extended) affine Weyl group symmetry relevant to this paper.
A Lax pair of a $q$-Painlev\'e equation is given by a pair of two linear equations
\begin{equation}
 \phi(qx,t)=A(x,t)\phi(x,t),\quad
 \phi(x,qt)=B(x,t)\phi(x,t),
\end{equation}
such that their compatibility condition 
\begin{equation}
 B(qx,t)A(x,t)=A(x,qt)B(x,t)
\end{equation}
gives the corresponding $q$-Painlev\'e equation. 
Here, $\phi(x,t)$ is a column vector, $A(x,t)$ and $B(x,t)$ are square matrices, $x\in\bbC$ is a parameter, and $t\in\bbC$ and $q\in\bbC$ are respectively the independent variable and the shift parameter of the $q$-Painlev\'e equation.
Next, we explain an (extended) affine Weyl group symmetry of a $q$-Painlev\'e equation.
A transformation from an integrable system to an integrable system is called a B\"acklund transformation. 
In the case of a $q$-Painlev\'e equation, there are self-B\"acklund transformations, that is, B\"acklund transformations to itself, which collectively form an (extended) affine Weyl group.
In that case, the $q$-Painlev\'e equation is said to have an (extended) affine Weyl group symmetry.
Note that an (extended) affine Weyl group symmetry of a $q$-Painlev\'e equation does not always mean its B\"acklund transformations alone. It often means a large group of transformations, including its time evolution.
For example, Theorem \ref{theorem:symmetry_n=even} asserts that the system \eqref{eqn:intro_dP_even} can be obtained from a birational action of an extended affine Weyl group of $(A_{2N-1}\rtimes A_1)^{(1)}$-type.
In this case, we say that the system \eqref{eqn:intro_dP_even} has an extended affine Weyl group symmetry of $(A_{2N-1}\rtimes A_1)^{(1)}$-type.

There exist six Painlev\'e equations.
However, by Okamoto's space of initial values \cite{OkamotoK1979:MR614694}, the Painlev\'e III equation can be classified into three types, and then the Painlev\'e equations can be considered as eight types \cite{OKSO2006:MR2277519}.
On the other hand, an infinite number of $q$-Painlev\'e equations exist.
By considering Sakai's space of initial values \cite{SakaiH2001:MR1882403}, which is an extension of Okamoto's space of initial values, $q$-Painlev\'e equations can be classified into nine surface types (see Figure \ref{fig:Sakai_surface}).

\begin{figure}[htbp]
\[
 A_0^{(1)\ast}
 \to A_1^{(1)}
 \to A_2^{(1)}
 \to A_3^{(1)}
 \to A_4^{(1)}
 \to A_5^{(1)}
 \to A_6^{(1)}~
 \begin{matrix}
 \nearrow\\
 \searrow
 \end{matrix}
 \begin{matrix}
 ~A_7^{(1)}\\[1.6em]
 ~{A_7^{(1)}}'
 \end{matrix}
\]
\caption{
Types of spaces of initial values for $q$-Painlev\'e equations.
The surface degenerates in the direction of the arrow due to the specialization and confluence of the base points that characterize the surface types.
On each surface, transformations collectively forming an (extended) affine Weyl group exist. A birational action of an element with infinite order of each transformation group gives rise to a $q$-Painlev\'e equation.
}
\label{fig:Sakai_surface}
\end{figure}

In \cite{KNY2002:MR1958118}, Kajiwara-Noumi-Yamada showed a birational representation of the extended affine Weyl group of $(A_{m-1}\times A_{n-1})^{(1)}$-type (KNY's representation), where $m$ and $n$ are integers greater than or equal to $2$, except for $(m,n)=(2,2)$.
Note that KNY's representation is essentially the same even if $m$ and $n$ are interchanged \cite{NY2018:zbMATH06876428}.
It was shown in \cite{KNY2002:MR1917133} that KNY's representation gives the Painlev\'e type $q$-difference equations, including $q$-Painlev\'e equations as the second-order ordinary $q$-difference equations.
Indeed, \cite{KNY2002:MR1917133} showed that the case $(m,n)=(2,3)$ gives $q$-Painlev\'e equations of $A_5^{(1)}$-surface type, and \cite{KNY2002:MR1917133,TakenawaT2003:MR1996297} showed the case $(m,n)=(2,4)$ gives $q$-Painlev\'e equations of $A_3^{(1)}$-surface type.

Recently, it has been reported that KNY's representation can be extended to the birational representation of the extended affine Weyl group of $(A_{m-1}\times A_{n-1}\times A_{g-1})^{(1)}$-type (extended KNY's representation), where $g$ is the common greatest divisor of $m$ and $n$ \cite{MOT2018:AmAnAg,MOT2023:AmAnAgArxiv}.
The paper \cite{OS2020:10.1093/imrn/rnaa283} presents the explicit forms of the Painlev\'e type $q$-difference equations obtained from the extended KNY's representation.
It is also shown that these equations include Sakai's $q$-Garnier system \cite{SakaiH2005:MR2177121}, Tsuda's $q$-Painlev\'e system arising from the $q$-UC hierarchy \cite{TsudaT2010:MR2563787} and Suzuki's $q$-Painlev\'e system arising from the $q$-DS hierarchy \cite{Suzuki2015:zbMATH06608012,Suzuki2017:zbMATH06944877}.
Moreover, in \cite{SuzukiT2019:1523388080554396672}, Suzuki claims that the case $(m,n,g)=(3,3,3)$ gives $q$-Painlev\'e equations of $A_2^{(1)}$-surface type.

As mentioned above, KNY's representation derived in 2002 is still being studied from various angles and is undergoing further development.
The motivation for this study is to find research subjects that would be studied for a long time and used in a wide range of fields, as KNY's representation is.

We consider that the birational representation in this study and the extended KNY's representation are different for the following reasons:
\begin{itemize}
\item 
Different periodic conditions are imposed on the same system for their derivations.

In our previous works \cite{JNS2015:MR3403054,JNS2014:MR3291391}, under a periodic condition, a system of partial difference equations having the consistency around a cube (CAC) property was found to give a birational representation of an extended affine Weyl group of type $(A_2\times A_1)^{(1)}$, which is KNY's representation with $(m,n)=(2,3)$.
By comparing the results in \cite{JNS2015:MR3403054,JNS2014:MR3291391} and those in \cite{KNY2002:MR1958118,KNY2002:MR1917133}, it is expected that by imposing the periodic condition
\begin{equation}\label{eqn:intro_period_cond_2}
 U(l_1+1,\dots,l_n+1,l_0)=U(l_1,\dots,l_n,l_0)
\end{equation}
on the system of partial difference equations given in \S \ref{section:CAC_system}, KNY's representation with $m=2$ can be obtained.

On the other hand, by imposing the different periodic condition
\begin{equation}\label{eqn:intro_period_cond_1}
 U(l_1+1,\dots,l_n+1,l_0+1)=U(l_1,\dots,l_n,l_0)
\end{equation}
on the same system, we can obtain a birational representation studied in this paper.
The case $n=2$ is studied in \cite{JNS2015:MR3403054}, and its birational representation gives $q$-Painlev\'e equations of $A_6^{(1)}$-surface type.
Moreover, the case $n=3$ is studied in \cite{JNS2016:MR3584386}, and its birational representation gives $q$-Painlev\'e equations of $A_4^{(1)}$-surface type.
The case $n\geq 4$ is a new result of this paper.
\item 
From the viewpoint of cluster algebra\cite{MOT2021:Cluster}, the extended KNY's representation does not give $q$-Painlev\'e equations of $A_7^{(1)}$- and $A_6^{(1)}$-surface type.
However, the resulting birational representation in this study gives Equation \eqref{eqn:intro_dPII}, which is a $q$-Painlev\'e equation of $A_6^{(1)}$-surface type.
\end{itemize}

\begin{remark}
In \cite{AJT2020:zbMATH07212273}, higher-order generalizations of the $A_5^{(1)}$- and $A_3^{(1)}$-surface type $q$-Painlev\'e equations were derived by imposing the $(n,1)$-type periodic condition{\rm :}
\begin{equation}
 w(l_1+n,l_2+1)=w(l_1,l_2)
\end{equation}
on the multi-parametric version of the discrete modified KdV equation{\rm :}
\begin{equation}\label{eqn:intro_mpver_lmkdv}
 \dfrac{w(l_1+1,l_2+1)}{w(l_1,l_2)}
 =\dfrac{\al^{(1)}(l_1)\, w(l_1,l_2+1)-\be^{(1)}(l_2) w(l_1+1,l_2)}{\al^{(2)}(l_1)\, w(l_1+1,l_2)-\be^{(2)}(l_2) w(l_1,l_2+1)},
\end{equation}
where $l_1,l_2\in\bbZ$ are lattice parameters and $\{\al^{(1)}(l),\al^{(2)}(l),\be^{(1)}(l),\be^{(2)}(l)\}_{l\in\bbZ}$ are complex parameters.
From the following facts, the higher-order $q$-Painlev\'e equations in \cite{AJT2020:zbMATH07212273} are expected to be obtained from the (extended) KNY's representation.
\begin{itemize}
\item 
When 
\begin{equation}
 \al^{(1)}(l)=\al^{(2)}(l),\quad
 \be^{(1)}(l)=\be^{(2)}(l),
\end{equation}
Equation \eqref{eqn:intro_mpver_lmkdv} is called the lattice modified KdV (lmKdV) equation \cite{NC1995:MR1329559,NQC1983:MR719638}.
\item
In \cite{JNS2015:MR3403054}, using the case $n=2$ as an example, it was shown that the $(n,1)$-type periodically reduced lmKdV equation could be obtained from the $\omega$-lattice constructed from the $(1,1,\dots,1)$-type periodic reduction of the $(n+1)$-dimensional system of lmKdV equations by considering its restricted lattice. 
\item 
The papers \cite{JNS2015:MR3403054,JNS2014:MR3291391}, together with the papers \cite{KNY2002:MR1958118,KNY2002:MR1917133}, indicate that the $(1,1,\dots,1)$-type periodic reduction of the system of lmKdV equations gives KNY's representation with $m=2$.
\end{itemize}
\end{remark}

\subsection{Notation and Terminology}\label{subsection:notation_definitions}
This paper will use the following notations and terminologies for conciseness.
\begin{itemize}
\item 
For matrices $A$ and $B$, the symbol $AB$ means the matrix product $A.B$.
\item 
For transformations $s$ and $r$, the symbol $sr$ means the composite transformation $s\circ r$.
\item 
The ``$1$" in the transformation implies the identity transformation.
\item
For a transformation $s$, the relation $s^\infty=1$ means that there is no positive integer $k$ such as
$s^k=1$.
\item 
If the subscript number is greater than the superscript number in the product symbol, $1$ is assumed.
For example,
\begin{equation}
 \prod_{k=1}^0 2^k=1.
\end{equation}
\end{itemize}
\subsection{Outline of the paper}
This paper is organized as follows.
In \S \ref{section:CAC_system}, we introduce a system of partial difference equations with the CAC property and show its properties, a Lax representation, and transformations that keep the system invariant.
In \S \ref{section:reductions} and \S \ref{section:Lax_eqns}, 
we give proofs of Theorems \ref{theorem:symmetry_n=even} and \ref{theorem:symmetry_n=odd} and those of Theorems \ref{theorem:lax_n=even} and \ref{theorem:lax_n=odd}, respectively.
Some concluding remarks are given in \S \ref{ConcludingRemarks}.
\section{A system of partial difference equations with the CAC property}\label{section:CAC_system}
Fix an integer $n\geq2$.
We consider the following system of partial difference equations:
\begin{subequations}\label{eqns:ABS_U}
\begin{align}
 &\la(l_{0\cdots n})^2\dfrac{U_{\overline{ij}}}{U}=\dfrac{\al^{(i)}(l_i) U_{\overline{i}}-\al^{(j)}(l_j) U_{\overline{j}}}{\al^{(j)}(l_j) U_{\overline{i}}-\al^{(i)}(l_i) U_{\overline{j}}},
 &&i<j,\quad i,j\in\{1,\dots,n\},\\
 &\dfrac{U_{\overline{0\,k}}}{U}+\la(l_{0\cdots n})^4\dfrac{U_{\overline{0}}}{U_{\overline{k}}}+\al^{(k)}(l_k) \ka(l_0) \la(l_{0\cdots n})=0,
 &&k=1,\dots,n,\label{eqn:ABS_U_0k}
\end{align}
\end{subequations}
where $l_0,\dots,l_n\in\bbZ$ are lattice parameters, 
$\{\al^{(1)}(l),\dots,\al^{(n)}(l),\ka(l),\la(0)\}_{l\in\bbZ}$ are complex parameters and
\begin{subequations}
\begin{align}
 &U=U(l_1,\dots,l_n,l_0),\quad
 U_{\overline{i}}=U|_{\,l_i\to l_i+1},\quad
 U_{\overline{ij}}=U|_{(l_i,l_j)\to (l_i+1,l_j+1)},\\
 &l_{0\cdots n}:=\sum_{i=0}^nl_i,\quad
 \la(l)=\begin{cases}
 \la(0)&l\in2\bbZ,\\[0.5em]
 \dfrac{1}{\la(0)}&\text{otherwise}.
 \end{cases} 
\end{align}
\end{subequations}
\begin{remark}
By letting
\begin{equation}
 u(l_1,\dots,l_n,l_0)=H(l_1+\dots+l_n,l_0)U(l_1,\dots,l_n,l_0),
\end{equation}
where $H(l,l_0)$ satisfies
\begin{equation}
 H(l+1,l_0)=\dfrac{1}{\la(l+l_0)H(l,l_0)},\quad
 H(l,l_0+1)=\dfrac{\la(l+l_0)^2}{H(l,l_0)},
\end{equation}
the system \eqref{eqns:ABS_U} can be rewritten as
\begin{subequations}\label{eqns:ABS_u}
\begin{align}
 &\dfrac{u_{\overline{ij}}}{u}=\dfrac{\al^{(i)}(l_i) u_{\overline{i}}-\al^{(j)}(l_j) u_{\overline{j}}}{\al^{(j)}(l_j) u_{\overline{i}}-\al^{(i)}(l_i) u_{\overline{j}}},
 &&i<j,\quad i,j\in\{1,\dots,n\},\label{eqn:ABS_u_H3}\\
 &\dfrac{u_{\overline{0k}}}{u}+\dfrac{u_{\overline{0}}}{u_{\overline{k}}}+\al^{(k)}(l_k) \ka(l_0)=0,
 &&k=1,\dots,n,\label{eqn:ABS_u_D4}
\end{align}
\end{subequations}
where
\begin{equation}
 u=u(l_1,\dots,l_n,l_0),\quad
 u_{\overline{i}}=u|_{\,l_i\to l_i+1},\quad
 u_{\overline{ij}}=u|_{(l_i,l_j)\to (l_i+1,l_j+1)}.
\end{equation} 
We can easily verify that the system \eqref{eqns:ABS_u} has the consistency around a cube (CAC) property, which is known as a type of integrability (see \cite{BS2002:MR1890049,NijhoffFW2002:MR1912127,WalkerAJ:thesis,NW2001:MR1869690} for the CAC property).
Equation \eqref{eqn:ABS_u_H3} with fixed $i$ and $j$ is known as the lattice modified KdV equation \cite{NC1995:MR1329559,NQC1983:MR719638} or the H3 equation in the study of the CAC property\cite{ABS2003:MR1962121,ABS2009:MR2503862}, while Equation \eqref{eqn:ABS_u_D4} with fixed $k$ is known as Boll's D4 equation\cite{BollR2011:MR2846098,BollR2012:MR3010833}.
\end{remark}

The remainder of this section will discuss the properties of the system \eqref{eqns:ABS_U}.

\subsection{Symmetry of the system \eqref{eqns:ABS_U}}
We define the automorphisms of the lattice $\bbZ^{n+1}$:
$s_1$, \dots, $s_{n-1}$, $\pi$, $w_1$,
by the following actions on the coordinates $(l_1,\dots,l_n,l_0)\in\bbZ^{n+1}$:
\begin{subequations}
\begin{align}
 s_i~:&(l_1,\dots,l_n,l_0)\mapsto (l_1,\dots,l_n,l_0)\Big|_{l_i\leftrightarrow l_{i+1}},\quad
 i=1,\dots,n-1,\\
 \pi~:&(l_1,\dots,l_n,l_0)\mapsto (l_n+1,l_1,\dots,l_{n-1},l_0),\\
 w_1:&(l_1,\dots,l_n,l_0)\mapsto (-l_n,\dots,-l_1,-l_0-1).
\end{align}
\end{subequations}
We lift the action of these transformations to the action on the parameters and the $U$ variable in the system \eqref{eqns:ABS_U} by
\begin{subequations}\label{eqns:trans_general_U}
\begin{align}
 &s_i(\al^{(j)}(l))
 =\begin{cases}
 \al^{(i+1)}(l)&\text{if } j=i,\\
 \al^{(i)}(l)&\text{if } j=i+1,\\
 \al^{(j)}(l)&\text{otherwise},
 \end{cases}\qquad
 s_i(U)=U|_{l_i\leftrightarrow l_{i+1}},\quad
 i=1,\dots,n-1,\\
 &\pi(\al^{(j)}(l))
 =\begin{cases}
 \al^{(j+1)}(l)&\text{if } j=1,\dots,n-1,\\
 \al^{(1)}(l+1)&\text{if } j=n,
 \end{cases}\qquad
 \pi(\la(l))=\dfrac{1}{\la(l)},\\
 &\pi(U)=U(l_n+1,l_1,\dots,l_{n-1},l_0),\\
 &w_1(\al^{(j)}(l))=\al^{(n+1-j)}(-l-1),~ j=1,\dots,n,\quad
 w_1(\ka(l))=\ka(-l-2),\\
 &w_1(\la(l))=\dfrac{1}{\la(l)},\quad
 w_1(U)=\dfrac{1}{U(-l_n,\dots,-l_1,-l_0-1)},
\end{align}
\end{subequations}
where $U=U(l_1,\dots,l_n,l_0)$.
Note that the lifting of the action of such transformations can be easily deduced from the fact that the system \eqref{eqns:ABS_U} is a system of partial difference equations placed on the lattice $\bbZ^{n+1}$ with the CAC property.
(See for example \cite{JNS2016:MR3584386,NakazonoN2018:MR3760161}.)

Define the transformations $s_0$, $w_0$ and $r$ by
\begin{equation}\label{eqn:s0w0w1rdef}
 s_0=\pi^{-1}s_1\pi,\quad
 w_0=\pi^2w_1,\quad
 r=\pi w_1.
\end{equation}
Then, the following lemma holds.
\begin{lemma}\label{lemma:lattice_Weyl}
~\\[-1.5em]
\begin{itemize}
\item[{\bf (i)\,:}] 
The system \eqref{eqns:ABS_U} is invariant under the action of $\langle s_1,\dots,s_{n-1},w_1,\pi\rangle$.
\item[{\bf (ii)\,:}]
$\langle s_1,\dots,s_{n-1},w_1,\pi\rangle=\langle s_0,\dots,s_{n-1},w_0,w_1,r\rangle$. 
\item[{\bf (iii)\,:}] 
The groups of transformations $\langle s_0,\dots,s_{n-1}\rangle$ and $\langle w_0,w_1\rangle$ form the affine Weyl groups of type $A_{n-1}^{(1)}$ and $A_1^{(1)}$, respectively.
Moreover, when $n=2$,
\begin{subequations}
\begin{align}
 &\langle s_0,s_1,w_0,w_1\rangle=\langle s_0,s_1\rangle\times \langle w_0,w_1\rangle,\\
 &\langle s_0,s_1,r\rangle=\langle s_0,s_1\rangle\rtimes\langle r\rangle,\quad
 \langle w_0,w_1,r\rangle=\langle w_0,w_1\rangle\rtimes\langle r\rangle
\end{align}
\end{subequations}
hold, while $n>2$,
\begin{subequations}
\begin{align}
 &\langle s_0,\dots,s_{n-1},w_0,w_1\rangle=\langle s_0,\dots,s_{n-1}\rangle\rtimes\langle w_0,w_1\rangle,\\
 &\langle s_0,\dots,s_{n-1},r\rangle=\langle s_0,\dots,s_{n-1}\rangle\rtimes\langle r\rangle,\quad
 \langle w_0,w_1,r\rangle=\langle w_0,w_1\rangle\rtimes\langle r\rangle
\end{align}
\end{subequations}
hold.
\end{itemize}
\end{lemma}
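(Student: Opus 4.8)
The plan is to reduce each of the three parts to explicit computations with the lattice automorphisms and their lifts \eqref{eqns:trans_general_U}, treating the parts in order. For part (i) I would verify directly that each generator $s_1,\dots,s_{n-1},w_1,\pi$ carries every equation of \eqref{eqns:ABS_U} to another equation of the same family. The transpositions $s_i$ only exchange $(l_i,\al^{(i)})\leftrightarrow(l_{i+1},\al^{(i+1)})$ while fixing $l_{0\cdots n}$, $\la$ and $\ka$, so since both the H3-type relation and the D4-type relation \eqref{eqn:ABS_U_0k} are built symmetrically from the index data, this check is immediate. For $\pi$ the crucial observation is that the shift raises $l_{0\cdots n}$ by $1$, whence $\la(l_{0\cdots n})\mapsto\la(l_{0\cdots n}+1)=\la(l_{0\cdots n})^{-1}$, matching the prescribed $\pi(\la)=\la^{-1}$; combining this with the cyclic relabelling $\al^{(j)}\mapsto\al^{(j+1)}$ returns the same system. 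For $w_1$ the inversion $U\mapsto 1/U(\cdots)$ turns each Möbius-type H3 relation into another one after the order-reversal of indices, while the rule $\ka(l)\mapsto\ka(-l-2)$ makes the affine relation \eqref{eqn:ABS_U_0k} invariant. The last two generators carry essentially all of the computation in (i).

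For part (ii) the inclusion $\langle s_0,\dots,s_{n-1},w_0,w_1,r\rangle\subseteq\langle s_1,\dots,s_{n-1},w_1,\pi\rangle$ is immediate from \eqref{eqn:s0w0w1rdef}, as $s_0=\pi^{-1}s_1\pi$, $w_0=\pi^2w_1$ and $r=\pi w_1$ all lie on the right. For the reverse inclusion I would first check that $w_1$ is an involution — a one-line coordinate computation shows $w_1^2$ fixes $(l_1,\dots,l_n,l_0)$, and the same holds on the parameters and on $U$ — so that $w_1^{-1}=w_1$ and hence $\pi=rw_1^{-1}=rw_1$ belongs to $\langle s_0,\dots,s_{n-1},w_0,w_1,r\rangle$. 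Since $s_1,\dots,s_{n-1},w_1$ already appear as generators on the right, this yields the opposite inclusion and the two groups coincide.

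For part (iii) I would first compute the coordinate action of $s_0=\pi^{-1}s_1\pi$, namely $(l_1,\dots,l_n,l_0)\mapsto(l_n+1,l_2,\dots,l_{n-1},l_1-1,l_0)$, and verify that $s_0,\dots,s_{n-1}$ satisfy exactly the Coxeter relations of the cyclic diagram $A_{n-1}^{(1)}$ (involutions, braid relations for cyclically adjacent indices, commutation otherwise). To upgrade the resulting surjection from the abstract Coxeter group to an isomorphism, I would use that every $s_k$ preserves $\sigma:=l_1+\dots+l_n$ and fixes $l_0$, so $\langle s_0,\dots,s_{n-1}\rangle$ acts on each hyperplane $\{\sigma=\text{const}\}$ as the standard faithful affine action — with $s_1,\dots,s_{n-1}$ realizing $S_n$ and $s_0$ the affine reflection in the highest root — whence the representation is faithful and $\langle s_0,\dots,s_{n-1}\rangle\cong W(A_{n-1}^{(1)})$. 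For the second group I record $w_0w_1=\pi^2w_1^2=\pi^2$ and check $w_0^2=w_1^2=1$; since $\pi^n$ is translation by $(1,\dots,1)$ it has infinite order, hence so does $\pi^2$, and $\langle w_0,w_1\rangle$ is the infinite dihedral group $W(A_1^{(1)})$.

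The product decompositions then follow from conjugation computations together with trivial-intersection arguments. I would check that $w_1$, $w_0$ and $r$ act on $\{s_0,\dots,s_{n-1}\}$ as automorphisms of the cyclic diagram (e.g.\ $w_1s_jw_1=s_{n-j}$ for $1\le j\le n-1$, with $r$ acting as a reflection of the cycle), so that $\langle s_0,\dots,s_{n-1}\rangle$ is normalized by $w_0,w_1,r$; and, using that $r$ is an involution (equivalently $w_1\pi w_1=\pi^{-1}$), that $rw_1r=\pi^2w_1=w_0$ and $rw_0r=w_1$, so $\langle w_0,w_1\rangle$ is normalized by $r$. The intersections are trivial by three invariants of the coordinate action: every $s_k$ fixes $l_0$ and preserves $\sigma$, the rotations $\pi^{2k}$ shift $\sigma$ by $2k$, the reflections of $\langle w_0,w_1\rangle$ send $l_0\mapsto -l_0-1$ while all of $\langle w_0,w_1\rangle$ preserves $\sigma\bmod 2$, and $r$ sends $\sigma\mapsto 1-\sigma$ (flipping $\sigma\bmod 2$). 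These separate $\langle s_\bullet\rangle$, $\langle w_\bullet\rangle$ and $\langle r\rangle$ pairwise and give the stated semidirect products. The $n=2$ dichotomy is exactly that $w_1s_1w_1=s_{n-1}$ equals $s_1$ precisely when $n=2$: there the conjugation action is trivial and elementwise commutation yields the direct product $\langle s_0,s_1\rangle\times\langle w_0,w_1\rangle$, whereas for $n>2$ the action is nontrivial and the product is genuinely semidirect. I expect the main obstacle to be part (iii): not the relation-checking itself, but proving that $\langle s_0,\dots,s_{n-1}\rangle$ is the full affine Weyl group rather than a proper quotient (which needs the faithfulness argument above), and organizing the conjugation bookkeeping carefully enough to pin down the $n=2$ versus $n>2$ split and to confirm the relations on the parameters and on $U$, not merely on the coordinates.
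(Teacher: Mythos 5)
Your proposal is correct. For parts (i) and (ii) it coincides with the paper's proof: (i) is a direct verification generator by generator, and (ii) follows from the definitions \eqref{eqn:s0w0w1rdef} in one direction and from $\pi=(\pi w_1)w_1=rw_1$ (using $w_1^2=1$) in the other. Where you genuinely diverge is part (iii). The paper's proof consists of verifying by direct computation that the generators satisfy the relations \eqref{eqns:fundamental_A1A1} (for $n=2$) and \eqref{eqns:fundamental_An-1A1} (for $n>2$), and then asserting the claimed group structure; it does not address why $\langle s_0,\dots,s_{n-1}\rangle$ is the full affine Weyl group rather than a proper quotient, nor why the intersections underlying the (semi)direct products are trivial. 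You supply exactly these ingredients: the restriction of the action to the lattice preserves each hyperplane $\{l_1+\cdots+l_n=\mathrm{const}\}$ and fixes $l_0$, and there it is the standard faithful affine realization of $W(A_{n-1}^{(1)})$ (with $s_0$ the affine reflection in the highest root), which upgrades the surjection from the abstract Coxeter group to an isomorphism; and your invariants ($l_0$ fixed versus $l_0\mapsto -l_0-1$, the sum $\sigma$ preserved versus shifted by the translations $(w_0w_1)^k=\pi^{2k}$, and the parity flip of $\sigma$ under $r$) give the trivial intersections and pin down direct versus semidirect in the $n=2$ and $n>2$ cases. This distinction matters: for $n=2$ the paper's relations do suffice as stated, since two involutions whose product has infinite order generate the infinite dihedral group $W(A_1^{(1)})$ and the relation $(s_0s_1)^\infty=1$ is part of \eqref{eqns:fundamental_A1A1}; but for $n>2$ the Coxeter relations alone only exhibit a quotient of $W(A_{n-1}^{(1)})$ (e.g.\ they are also satisfied in $S_n$ with $s_0$ mapped to the transposition exchanging the first and last letters), so your faithfulness argument closes a real gap that the paper leaves implicit. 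The cost is length and some conjugation bookkeeping; the benefit is a complete identification of the abstract groups, not just a list of relations they satisfy.
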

\begin{proof}
{\bf (i)} can be verified by direct calculation.
Let us consider {\bf (ii)} and {\bf (iii)} separately for $n=2$ and $n>2$.
\begin{description}
\item[Case $n=2$]
We can verify that the transformations $\{s_1,w_1,\pi\}$ satisfy the following relations:
\begin{equation}\label{eqn:rels_s_pi_io_n=2}
 {s_1}^2=w_1^2=\pi^\infty=1,\quad
 (w_1 s_1)^2=(\pi s_1)^\infty=(\pi w_1)^2=1,\quad
 (\pi^2 s_1)^2=1.
\end{equation}
The definition \eqref{eqn:s0w0w1rdef} gives 
\begin{equation}
 \langle s_1,w_1,\pi\rangle\supset\langle s_0,s_1,w_0,w_1,r\rangle,
\end{equation}
while the relation
\begin{equation}
 \pi=\pi {w_1}^2=(\pi w_1)w_1=rw_1
\end{equation}
gives
\begin{equation}
 \langle s_1,w_1,\pi\rangle\subset\langle s_0,s_1,w_0,w_1,r\rangle.
\end{equation}
Therefore, {\bf (ii)} holds.
The relation \eqref{eqn:rels_s_pi_io_n=2} leads to the fact that the transformations $\{s_0,s_1,w_0,w_1\}$ satisfy the following relations:
\begin{subequations}\label{eqns:fundamental_A1A1}
\begin{align}
 &{s_0}^2={s_1}^2=(s_0s_1)^\infty=1,\quad
 {w_0}^2={w_1}^2=(w_0w_1)^\infty=1,\\
 &(w_0s_0)^2=(w_0s_1)^2=(w_1s_0)^2=(w_1s_1)^2=1,
\end{align}
and the transformation $r$ satisfies the following relations:
\begin{equation}\label{eqn:fundamental_A1A1_2}
 r^2=1,\quad
 rs_0=s_1r,\quad
 rs_1=s_0r,\quad
 rw_0=w_1r,\quad
 rw_1=w_0r.
\end{equation}
\end{subequations}
Therefore, {\bf (iii)} holds.
\item[Case $n>2$]
The discussion is the same as in the case $n=2$, so the details are omitted.
The transformations $\{s_0,\dots,s_{n-1},w_1,\pi\}$ satisfy the following relations:
\begin{subequations}
\begin{align}
 &{s_i}^2=(s_is_{i\pm 1})^3=(s_is_j)^2=1,~ j\neq i\pm 1,\quad
 {w_1}^2=1,\\
 &w_1 s_k=s_{n-k}w_1,\quad
 \pi^\infty=1,\quad
 \pi s_k=s_{k+1}\pi,\quad
 (\pi w_1)^2=1,
\end{align}
\end{subequations}
where $i,j,k\in\bbZ/n\bbZ$.
Note that to simplify the description of relations, $s_0$ is added to the transformations to be considered above.
Moreover, the transformations $\{s_0,\dots,s_{n-1},w_0,w_1\}$ satisfy the following relations:
\begin{subequations}\label{eqns:fundamental_An-1A1}
\begin{align}
 &{s_i}^2=(s_is_{i\pm 1})^3=(s_is_j)^2=1,~ j\neq i\pm 1,\quad
 {w_0}^2={w_1}^2=(w_0w_1)^\infty=1,\\
 &w_0s_k=s_{n-k+2}w_0,\quad
 w_1s_k=s_{n-k}w_1,
\end{align}
where $i,j,k\in\bbZ/n\bbZ$,
and the transformation $r$ satisfies the following relations:
\begin{equation}\label{eqn:fundamental_An-1A1_2}
 r^2=1,\quad
 rs_i=s_{n-i+1}r,\quad
 rw_0=w_1r,\quad
 rw_1=w_0r,
\end{equation}
\end{subequations}
where $i\in\bbZ/n\bbZ$.
\end{description}
\end{proof}

Lemma \ref{lemma:lattice_Weyl} leads the following lemma.

\begin{lemma}\label{lemma:lattice_Weyl_2}
The following hold:
\begin{itemize}
\item[{\bf (i)\,:}] 
When $n=2$, the transformation group $\langle s_0,s_1,w_0,w_1,r\rangle$ satisfies the relations \eqref{eqns:fundamental_A1A1}.
Both transformation groups $\langle s_0,s_1\rangle$ and $\langle w_0,w_1\rangle$ form the
affine Weyl group of type $A_1^{(1)}$.
Therefore, we denote the transformation group $\langle s_0,s_1,w_0,w_1\rangle=\langle s_0,s_1\rangle\times\langle w_0,w_1\rangle$ as $W((A_1\times A_1)^{(1)})$.
Moreover, as evident from the relations \eqref{eqn:fundamental_A1A1_2}, the transformation $r$ corresponds to reflections of two Dynkin diagrams of type $A_1^{(1)}$ associated with $\langle s_0,s_1\rangle$ and $\langle w_0,w_1\rangle$.
Therefore, we refer to the transformation group $\langle s_0,s_1,w_0,w_1,r\rangle=W((A_1\times A_1)^{(1)})\rtimes\langle r\rangle$ as an extended affine Weyl group of type $(A_1\times A_1)^{(1)}$
and denote it as $\widetilde{W}((A_1\times A_1)^{(1)})$.
\item[{\bf (ii)\,:}] 
When $n>2$, the transformation group $\langle s_0,\dots,s_{n-1},w_0,w_1,r\rangle$ satisfies the relations \eqref{eqns:fundamental_An-1A1}.
Transformation groups $\langle s_0,\dots,s_{n-1}\rangle$ and $\langle w_0,w_1\rangle$ form the affine Weyl group of type $A_{n-1}^{(1)}$ and that of type $A_1^{(1)}$, respectively.
Therefore, we denote the transformation group $\langle s_0,\dots,s_{n-1},w_0,w_1\rangle=\langle s_0,\dots,s_{n-1}\rangle\rtimes\langle w_0,w_1\rangle$ as $W((A_{n-1}\rtimes A_1)^{(1)})$.
Moreover, as evident from the relations \eqref{eqn:fundamental_An-1A1_2}, the transformation $r$ corresponds to reflections of the Dynkin diagram of type $A_{n-1}^{(1)}$ associated with $\langle s_0,\dots,s_{n-1}\rangle$ and that of type $A_1^{(1)}$ associated with $\langle w_0,w_1\rangle$.
Therefore, we refer to the transformation group $\langle s_0,\dots,s_{n-1},w_0,w_1,r\rangle=W((A_{n-1}\rtimes A_1)^{(1)})\rtimes\langle r\rangle$ as an extended affine Weyl group of type $(A_{n-1}\rtimes A_1)^{(1)}$
and denote it as $\widetilde{W}((A_{n-1}\rtimes A_1)^{(1)})$.

Note that we sometimes write $\widetilde{W}((A_{n-1}\rtimes A_1)^{(1)})$, including the case $n=2$, for simplicity.
\item[{\bf (iii)\,:}]
The system \eqref{eqns:ABS_U} is invariant under the action of $\widetilde{W}((A_{n-1}\rtimes A_1)^{(1)})$.
\end{itemize}
\end{lemma}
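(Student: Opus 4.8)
The plan is to treat Lemma~\ref{lemma:lattice_Weyl_2} as a corollary of Lemma~\ref{lemma:lattice_Weyl}, splitting the work into a structural part that establishes assertions (i) and (ii) and an invariance part that establishes (iii). The structural part is essentially a matter of reading off the relations already verified inside the proof of Lemma~\ref{lemma:lattice_Weyl} and recognizing them as the standard presentations of the groups named in the statement; the invariance part then follows by chaining together two separate conclusions of Lemma~\ref{lemma:lattice_Weyl}.

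For (i) and (ii) I would begin from the relations \eqref{eqns:fundamental_A1A1} (case $n=2$) and \eqref{eqns:fundamental_An-1A1} (case $n>2$). In each case the relations among the $s$-generators, namely ${s_i}^2=(s_is_{i\pm1})^3=(s_is_j)^2=1$ with cyclic indices, are exactly the Coxeter presentation of $W(A_{n-1}^{(1)})$ (reducing to ${s_0}^2={s_1}^2=(s_0s_1)^\infty=1$ for $n=2$), while ${w_0}^2={w_1}^2=(w_0w_1)^\infty=1$ is the Coxeter presentation of $W(A_1^{(1)})$; this yields the first two sentences of each assertion. The cross relations between the two families---trivial commutation $(w_as_b)^2=1$ for $n=2$ and the conjugations $w_0s_k=s_{n-k+2}w_0$, $w_1s_k=s_{n-k}w_1$ for $n>2$---are precisely the direct- and semidirect-product decompositions recorded in Lemma~\ref{lemma:lattice_Weyl}(iii), which justify writing $\langle s_0,s_1\rangle\times\langle w_0,w_1\rangle$ and $\langle s_0,\dots,s_{n-1}\rangle\rtimes\langle w_0,w_1\rangle$. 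It then remains to read off the behaviour of $r$ from \eqref{eqn:fundamental_A1A1_2} and \eqref{eqn:fundamental_An-1A1_2}: since $r^2=1$ and $r$ conjugates $s_i\mapsto s_{n-i+1}$ (equivalently $s_0\leftrightarrow s_1$ when $n=2$) and $w_0\leftrightarrow w_1$, the involution $r$ permutes the generators as a simultaneous reflection of the two affine Dynkin diagrams. Hence $\langle r\rangle$ acts on $W$ as an order-two group of diagram automorphisms and the full group is the semidirect product $W\rtimes\langle r\rangle$, which we name $\widetilde{W}((A_1\times A_1)^{(1)})$ and $\widetilde{W}((A_{n-1}\rtimes A_1)^{(1)})$ respectively.

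For (iii) I would simply chain the two parts of Lemma~\ref{lemma:lattice_Weyl}: part~(i) asserts invariance of the system \eqref{eqns:ABS_U} under $\langle s_1,\dots,s_{n-1},w_1,\pi\rangle$, while part~(ii) identifies this group with $\langle s_0,\dots,s_{n-1},w_0,w_1,r\rangle$, which by the naming introduced in (i)--(ii) is exactly $\widetilde{W}((A_{n-1}\rtimes A_1)^{(1)})$; invariance under the extended affine Weyl group follows immediately. I do not expect a genuine computational obstacle, since the completeness of the defining relations was already settled in Lemma~\ref{lemma:lattice_Weyl}. The one point that deserves care---and the only place where the word ``extended'' carries content---is confirming that adjoining $r$ properly enlarges $W$, i.e.\ that $r\notin W$ so that $W\rtimes\langle r\rangle$ does not collapse. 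I would verify this by tracking the action of $r=\pi w_1$ on the lattice $\bbZ^{n+1}$ and on the parameters via \eqref{eqns:trans_general_U} and exhibiting an invariant that distinguishes the cosets $W$ and $rW$, thereby confirming that $r$ is a genuine outer diagram automorphism rather than an inner one.
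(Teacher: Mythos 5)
Your proposal is correct and follows essentially the same route as the paper: the paper gives no separate proof of Lemma~\ref{lemma:lattice_Weyl_2}, presenting it as an immediate consequence of Lemma~\ref{lemma:lattice_Weyl} (the relations \eqref{eqns:fundamental_A1A1}, \eqref{eqns:fundamental_An-1A1} verified there identify the Coxeter presentations and product structures, and invariance in (iii) follows by chaining parts (i) and (ii) of that lemma, exactly as you do). Your closing remark about checking $r\notin W$ so that $W\rtimes\langle r\rangle$ is a genuine extension is a point of rigor the paper passes over silently, but it is a refinement of, not a departure from, the paper's argument.
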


\begin{remark}\label{remark:Ti_U}
We lift the action of automorphisms of the lattice $\bbZ^{n+1}$$:$
\begin{equation}
 T_i:(l_1,\dots,l_n,l_0)\mapsto (l_1,\dots,l_n,l_0)\Big|_{l_i\to l_{i}+1},\quad
 i=0,\dots,n,
\end{equation}
to the action on the parameters and the $U$ variable in the system \eqref{eqns:ABS_U} by
\begin{subequations}
\begin{align}
 &T_i(\al^{(j)}(l))
 =\begin{cases}
 \al^{(i)}(l+1)&\text{if } j=i,\\
 \al^{(j)}(l)&\text{otherwise},\quad
 \end{cases}\quad
 T_i(\kappa(l))
 =\begin{cases}
 \kappa(l+1)&\text{if } i=0,\\
 \kappa(l)&\text{otherwise},\quad
 \end{cases}\\
 &T_i(\la(l))=\la(l+1),\quad
 T_i(U)=U_{\overline{i}},
\end{align}
\end{subequations}
where $U=U(l_1,\dots,l_n,l_0)$.
Each $T_i$ can be regarded as the shift operator in the $l_i$-direction of the system \eqref{eqns:ABS_U}.
Moreover, $T_1$, \dots, $T_n$ can be expressed by the composite transformations of $\widetilde{W}((A_{n-1}\rtimes A_1)^{(1)})$ as
\begin{equation}\label{eqn:Ti_Weyl}
 T_1=\pi s_{n-1}s_{n-2}\cdots s_1,\quad
 T_{i+1}=\pi T_i\pi^{-1},~i=1,\dots,n-1,
\end{equation}
but not $T_0$.
This is because $T_0$ acts on the parameter $\ka(l)$ as $T_0(\ka(l))=\ka(l+1)$, whereas the elements of $\widetilde{W}((A_{n-1}\rtimes A_1)^{(1)})$ have no such action.
\end{remark}

\subsection{Lax representation of the system \eqref{eqns:ABS_U}}
We obtain a Lax representation of the system \eqref{eqns:ABS_U} following the method given in \cite{BS2002:MR1890049,NijhoffFW2002:MR1912127,WalkerAJ:thesis} as follows.
\begin{subequations}\label{eqns:Weyl_phi}
\begin{align}
 &\phi_{\overline{i}}
 =\begin{pmatrix}
 \dfrac{\mu U_{\overline{i}}}{\al^{(i)}(l_i)U}&-\dfrac{{U_{\overline{i}}}^2}{\la(l_{0\cdots n})}\\[1em]
 \dfrac{\la(l_{0\cdots n})}{U^2}&-\dfrac{\mu U_{\overline{i}}}{\al^{(i)}(l_i)U}
 \end{pmatrix}\phi,\quad
 i=1,\dots,n,\\
 &\phi_{\overline{0}}
 =\begin{pmatrix}
 -\dfrac{\mu\ka(l_0)U_{\overline{0}}}{U}&-\la(l_{0\cdots n})^2{U_{\overline{0}}}^2\\[1em]
 \dfrac{1}{\la(l_{0\cdots n})^2U^2}&0
 \end{pmatrix}\phi,
\end{align}
\end{subequations}
where $\mu\in\bbC$ is a spectral variable, $\phi=\phi(l_1,\dots,l_n,l_0)$ is a column vector of length two and 
\begin{equation}
 \phi_{\overline{i}}=\phi|_{\,l_i\to l_i+1}.
\end{equation}
Indeed, we can easily verify that the compatibility conditions
\begin{equation}
 \Big(\phi_{\overline{i}}\Big)_{\overline{j}}=\Big(\phi_{\overline{j}}\Big)_{\overline{i}}~,\quad
 0\leq i<j\leq n,
\end{equation}
give the system \eqref{eqns:ABS_U}.

\begin{lemma}
Let the action of $\widetilde{W}((A_{n-1}\rtimes A_1)^{(1)})$ on the vector $\phi=\phi(l_1,\dots,l_n,l_0)$ be given by
\begin{subequations}
\begin{align}
 &s_i(\phi)=\phi|_{l_i\leftrightarrow l_{i+1}},\quad
  i=1,\dots,n-1,\\
 &\pi(\phi)=\phi(l_n+1,l_1,\dots,l_{n-1},l_0),\\
 &w_1(\phi)=\dfrac{\displaystyle\prod_{k=1}^n A_k(-l_{n+1-k})}{U(-l_n,\dots,-l_1,-l_0-1)^2}\begin{pmatrix}0&1\\1&0\end{pmatrix}\phi(-l_n,\dots,-l_1,-l_0-1),
\end{align}
\end{subequations}
where $A_i(l)$, $i=1,\dots,n$, satisfies
\begin{equation}
 \dfrac{A_i(l+1)}{A_i(l)}=\dfrac{\al^{(i)}(l)^2}{\al^{(i)}(l)^2-\mu^2}.
\end{equation}
Moreover, the transformations $s_0$, $w_0$ and $r$ are defined by \eqref{eqn:s0w0w1rdef} and the parameter $\mu$ is invariant under the action of $\widetilde{W}((A_{n-1}\rtimes A_1)^{(1)})$.
Then, the system \eqref{eqns:Weyl_phi} is invariant under the action of $\widetilde{W}((A_{n-1}\rtimes A_1)^{(1)})$.
\end{lemma}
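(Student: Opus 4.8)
The plan is to exploit the fact that, by Lemma~\ref{lemma:lattice_Weyl}{\bf (ii)}, the group $\widetilde{W}((A_{n-1}\rtimes A_1)^{(1)})=\langle s_0,\dots,s_{n-1},w_0,w_1,r\rangle$ coincides with $\langle s_1,\dots,s_{n-1},w_1,\pi\rangle$, and that $s_0,w_0,r$ are introduced through \eqref{eqn:s0w0w1rdef} as words in these generators. Hence it suffices to verify that each of the generators $s_1,\dots,s_{n-1},\pi,w_1$ sends every relation of \eqref{eqns:Weyl_phi} to the same relation with the parameters and the $U$ variable transformed according to \eqref{eqns:trans_general_U}; invariance under $s_0,w_0,r$ then follows automatically from the group-action argument, since a composite of intertwiners is again an intertwiner, and the consistency of the transformed parameters is already guaranteed by Lemma~\ref{lemma:lattice_Weyl}{\bf (i)}. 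Concretely, writing each relation as $\phi_{\overline{m}}=L_m\phi$ for $m\in\{0,1,\dots,n\}$ (with $L_0$ the matrix in the $\overline{0}$-equation), I would establish $[g(\phi)]_{\overline{m}}=g(L_m)\,g(\phi)$ for each generator $g$, where $g(L_m)$ denotes $L_m$ with all its data transformed by $g$.

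For the permutations $s_i$ and the cyclic shift $\pi$ the check is essentially bookkeeping. Each of these acts on $\phi$ by relabelling the lattice coordinates in exactly the way it acts on $U$ and on the $\al^{(j)}$, so applying $s_i$ to the relation in a direction $j\notin\{i,i+1\}$ reproduces it verbatim, while the relations in directions $i$ and $i+1$ are interchanged in accordance with $\al^{(i)}\leftrightarrow\al^{(i+1)}$; likewise $\pi$ permutes the relations cyclically, and the prescribed rule $\pi(\la(l))=1/\la(l)$ is precisely what matches the $\la$-dependence of the matrices after the increment $l_n\mapsto l_n+1$. These verifications use only the transformation rules \eqref{eqns:trans_general_U} and present no difficulty.

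The substantive step, and the main obstacle, is the generator $w_1$, which reflects the lattice via $\sigma:(l_1,\dots,l_n,l_0)\mapsto(-l_n,\dots,-l_1,-l_0-1)$ and therefore reverses the orientation of every shift. Writing $w_1(\phi)=c\,P\,(\phi\circ\sigma)$ with $P=\left(\begin{smallmatrix}0&1\\1&0\end{smallmatrix}\right)$ and $c=\bigl(\prod_{k=1}^nA_k(-l_{n+1-k})\bigr)U(\sigma(l))^{-2}$, I note that a forward shift in direction $i$ becomes a backward shift in the $(n+1-i)$-th argument of $\phi\circ\sigma$. Re-expressing this backward shift through the original relation $\phi_{\overline{n+1-i}}=L_{n+1-i}\phi$ in inverse form, the required identity $[w_1(\phi)]_{\overline{i}}=w_1(L_i)\,w_1(\phi)$ reduces to a $2\times2$ matrix equation of the shape $c_{\overline{i}}\,P\,\bigl(L_{n+1-i}^{-1}\text{ at the reflected, shifted point}\bigr)=w_1(L_i)\,c\,P$. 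The decisive algebraic input is the determinant $\det L_m=\frac{U_{\overline{m}}^2}{U^2}\cdot\frac{\al^{(m)}(l_m)^2-\mu^2}{\al^{(m)}(l_m)^2}$: the spectral factor $\frac{\al^2}{\al^2-\mu^2}$ carried by the inverse Lax matrix is exactly the reciprocal of the $A$-ratio $A_{n+1-i}(-l_i-1)/A_{n+1-i}(-l_i)$ that appears in $c_{\overline{i}}/c$, by the defining relation $A_i(l+1)/A_i(l)=\al^{(i)}(l)^2/(\al^{(i)}(l)^2-\mu^2)$, so the two cancel precisely; the surviving $U^{-2}$ factors implement $w_1(U)=1/U(\sigma(l))$, while conjugation by $P$ interchanges the matrix entries so as to reproduce $w_1(L_i)$ under $w_1(\al^{(j)}(l))=\al^{(n+1-j)}(-l-1)$ and $w_1(\la(l))=1/\la(l)$. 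The $\overline{0}$-relation is handled in the same manner, using $\det L_0=U_{\overline{0}}^2/U^2$ (which is free of $\mu$, so no $A$-factor is needed there) together with $w_1(\ka(l))=\ka(-l-2)$. I expect the bulk of the effort to lie in this $w_1$-verification, as it is the only generator where the prefactor $c$, the swap matrix $P$, and the auxiliary functions $A_i$ must conspire to turn an inverse Lax matrix into the correctly transformed one.
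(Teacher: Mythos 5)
Your proposal is correct and matches the paper's approach: the paper's entire proof is ``It can be verified by direct calculation,'' and your generator-by-generator verification (trivial bookkeeping for $s_1,\dots,s_{n-1},\pi$, with the real work in $w_1$) is exactly that calculation, correctly identifying the decisive cancellation $c_{\overline{i}}/c=\det L_{n+1-i}$ at the reflected shifted point, which turns the inverse Lax matrix into its adjugate and makes the $A_k$-prefactor work. In particular your determinant formulas and the role of the defining relation for $A_i$ check out, so your outline supplies precisely the details the paper omits.
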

\begin{proof}
It can be verified by direct calculation.
\end{proof}

\begin{remark}\label{remark:Ti_phi}
Let the action of $T_i$, $i=0,\dots,n$, on the vector $\phi=\phi(l_1,\dots,l_n,l_0)$ and the parameter $\mu$ be given by
\begin{equation}
 T_i(\phi)=\phi_{\overline{i}},\quad
 T_i(\mu)=\mu.
\end{equation}
Similar to Remark \ref{remark:Ti_U}, $T_i$, $i=1,\dots,n$, can be expressed as \eqref{eqn:Ti_Weyl} by using elements of $\widetilde{W}((A_{n-1}\rtimes A_1)^{(1)})$, even if the action on the vector $\phi=\phi(l_1,\dots,l_n,l_0)$ and the parameter $\mu$ are included.
\end{remark}

\section{Reduction from the system \eqref{eqns:ABS_U} to the $q$P$^{(2N)}(A_6^{(1)})$ \eqref{eqn:intro_dP_even} and the $q$P$^{(2N)}(A_4^{(1)})$ \eqref{eqn:intro_dP_odd}}\label{section:reductions}
In this section, we derive the systems \eqref{eqn:intro_dP_even} and \eqref{eqn:intro_dP_odd} by imposing the periodic condition 
\begin{equation}
 U(l_1+1,\dots,l_n+1,l_0+1)=U(l_1,\dots,l_n,l_0)
\end{equation}
on the system \eqref{eqns:ABS_U}.
Moreover, we also give proofs of Theorems \ref{theorem:symmetry_n=even} and \ref{theorem:symmetry_n=odd}.

Since the conditions for the parameters obtained by the periodic reduction depend on the parity of $n$ in the system \eqref{eqns:ABS_U}, we consider the reduction in the cases of odd and even $n$ separately.
In addition, since $n=2$ differs from other even-numbered cases, we further separate them.
Note that the results for $n=2$ and $n=3$ are the same as those in \cite{JNS2015:MR3403054} and \cite{JNS2016:MR3584386}, respectively.

\subsection{Proof of Theorem \ref{theorem:symmetry_n=even} with $N=1$}\label{subsection:reduction_n=2}
Let us consider the periodic reduction of the system \eqref{eqns:ABS_U} when $n=2$.
Imposing the $(1,1,1)$-periodic condition
\begin{equation}\label{eqn:period_U_2}
 U(l_1+1,l_2+1,l_0+1)=U(l_1,l_2,l_0),
\end{equation}
on the system \eqref{eqns:ABS_U}, we obtain the following conditions of the parameters:
\begin{equation}
 \dfrac{\al^{(1)}(l_1+1)}{\al^{(1)}(l_1)}
 =\dfrac{\al^{(2)}(l_2+1)}{\al^{(2)}(l_2)}
 =\dfrac{\ka(l_0)}{\ka(l_0+1)},\quad
 \la(l)^4=1.
\end{equation}
Therefore, let
\begin{equation}\label{eqn:period_para_2}
 \al^{(1)}(l)=q^{2l}\al^{(1)}(0),\quad
 \al^{(2)}(l)=q^{2l}\al^{(2)}(0),\quad
 \ka(l)=q^{-2l}\ka(0),\quad
 \la(l)=1,
\end{equation}
where $q\in\bbC$ is a parameter.
Define three parameters $\{a_0,a_1,b\}$ and two variables $\{f_1,f_2\}$ by
\begin{subequations}
\begin{align}
 &a_0=\dfrac{\al^{(1)}(1)^{1/2}}{\al^{(2)}(0)^{1/2}},\quad
 a_1=\dfrac{\al^{(2)}(0)^{1/2}}{\al^{(1)}(0)^{1/2}},\quad
 b=\al^{(1)}(0)^{1/2}\al^{(2)}(0)^{1/2}\ka(0),\\
 &f_1=\dfrac{U(0,0,0)}{U(1,0,0)},\quad 
 f_2=\dfrac{U(1,0,0)}{U(1,1,0)}.
\end{align}
\end{subequations}
Then, the following holds:
\begin{equation}
 a_0a_1=q.
\end{equation}
From the action \eqref{eqns:trans_general_U}, the action of $\widetilde{W}((A_1\times A_1)^{(1)})$ on the new parameters and $f$ variables is obtained as the following lemma.

\begin{lemma}\label{lemma:n=2_f_Weyl}
The action of $\widetilde{W}((A_1\times A_1)^{(1)})=\langle s_0,s_1,w_0,w_1\rangle\rtimes\langle r\rangle$ on the parameters $\{a_0,a_1,b,q\}$ and the variables $\{f_1,f_2\}$ is given by
\begin{subequations}\label{eqns:WA1_para_f}
\begin{align}
 s_1:\,&(a_0,a_1,b,q,f_1,f_2)
 \to\left(a_0{a_1}^2,{a_1}^{-1},b,q,f_1\dfrac{{a_1}^2f_1f_2-1}{f_1f_2-{a_1}^2},\,f_2\dfrac{f_1f_2-{a_1}^2}{{a_1}^2f_1f_2-1}\right),\\
 \pi:\,&(a_0,a_1,b,q,f_1,f_2)
 \to\left(a_1,a_0,qb,q,f_2,\,-\dfrac{1}{f_1}\left(1+\dfrac{qa_0b}{f_2}\right)\,\right),\\
 w_1:\,&(a_0,a_1,b,q,f_1,f_2)
 \to\left({a_0}^{-1},{a_1}^{-1},q^2b,q^{-1},f_2,f_1\right).
\end{align}
\end{subequations}
Note that the transformations $s_0$, $w_0$ and $r$ are defined by \eqref{eqn:s0w0w1rdef}.
Under the action above, the elements of $\widetilde{W}((A_1\times A_1)^{(1)})$ also satisfy the relations \eqref{eqns:fundamental_A1A1}.
\end{lemma}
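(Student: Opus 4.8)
The plan is to obtain the closed formulas \eqref{eqns:WA1_para_f} by transporting the general action \eqref{eqns:trans_general_U} through the periodic specialization \eqref{eqn:period_para_2}, and then to confirm that the relations \eqref{eqns:fundamental_A1A1} persist. The action on the parameters is the easy part: substituting $\al^{(1)}(l)=q^{2l}\al^{(1)}(0)$, $\al^{(2)}(l)=q^{2l}\al^{(2)}(0)$, $\ka(l)=q^{-2l}\ka(0)$, $\la(l)=1$ into the definitions of $a_0,a_1,b$ and applying the formulas of \eqref{eqns:trans_general_U} for each generator on $\al^{(j)}$, $\ka$, $\la$ lets me read off the images directly. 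The identity $a_0a_1=q$ is exactly what recasts the raw output in terms of $\{a_0,a_1\}$ alone; for instance $s_1$ produces $a_0\mapsto qa_1=a_0a_1^2$ and $w_1$ produces $a_0\mapsto q^{-1}a_1=a_0^{-1}$, while $\pi$ fixes $\ka$ and hence sends $b\mapsto qb$.

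The substantive part is the action on $\{f_1,f_2\}$, for which I must express every shifted $U$-value that arises in terms of $f_1,f_2$ using the reduced equations \eqref{eqns:ABS_U} (with $\la=1$) together with the periodicity $U(l_1+1,l_2+1,l_0+1)=U(l_1,l_2,l_0)$. Three of the needed images are free: from $\pi(U)=U(l_2+1,l_1,l_0)$ and $w_1(U)=1/U(-l_2,-l_1,-l_0-1)$, combined with periodicity, one obtains $\pi(f_1)=f_2$, $w_1(f_1)=f_2$ and $w_1(f_2)=f_1$ by pure relabeling, with no equation needed. For $s_1$ I would invoke the H3 equation (the first line of \eqref{eqns:ABS_U}) at $(l_1,l_2,l_0)=(0,0,0)$, solving it for $U(0,1,0)$ in terms of $U(0,0,0),U(1,0,0),U(1,1,0)$; inserting $\al^{(2)}(0)/\al^{(1)}(0)=a_1^2$ and the definitions of $f_1,f_2$ then yields the rational expressions for $s_1(f_1)=U(0,0,0)/U(0,1,0)$ and $s_1(f_2)=U(0,1,0)/U(1,1,0)$.

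The only genuinely delicate computation is $\pi(f_2)=U(1,1,0)/U(2,1,0)$, because $U(2,1,0)$ lies off the three generating values and, crucially, the parameter $\ka$ (hence $b$) must enter. The key step is to apply the D4 equation \eqref{eqn:ABS_U_0k} with $k=1$ at the base point $(l_1,l_2,l_0)=(0,0,-1)$: using periodicity to rewrite $U(0,0,-1)=U(1,1,0)$ and $U(1,0,-1)=U(2,1,0)$, that equation collapses to $f_2+U(0,0,0)/U(2,1,0)+\al^{(1)}(0)\ka(-1)=0$, which I solve for $U(2,1,0)$ to get $\pi(f_2)=-f_1^{-1}\bigl(1+\al^{(1)}(0)\ka(-1)/f_2\bigr)$. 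Finally $\ka(-1)=q^2\ka(0)$ gives $\al^{(1)}(0)\ka(-1)=q^2\al^{(1)}(0)\ka(0)=qa_0b$ under \eqref{eqn:period_para_2}, matching the claimed coefficient. I expect the correct choice of base point, and the attendant sign and periodicity bookkeeping in this D4 reduction, to be the main obstacle; once it is pinned down, the rest is forced.

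The relations \eqref{eqns:fundamental_A1A1} require no new idea. They already hold for the generators acting on the unreduced lattice and system by Lemma \ref{lemma:lattice_Weyl}, and since the periodic condition is symmetric in $l_1,l_2$ and compatible with $\{s_1,w_1,\pi\}$, the relations descend to the reduced representation on $\{f_1,f_2\}$; alternatively, and unassailably, one verifies them directly on the closed formulas \eqref{eqns:WA1_para_f}, composing $s_0=\pi^{-1}s_1\pi$, $w_0=\pi^2w_1$ and $r=\pi w_1$ as in \eqref{eqn:s0w0w1rdef}. This last check is a finite, routine computation in the two variables $f_1,f_2$ and the parameters $\{a_0,a_1,b\}$.
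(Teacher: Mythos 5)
Your proposal is correct and takes essentially the same route as the paper's proof: the parameter action by direct substitution using $a_0a_1=q$, the $s_1$-action via the H3 equation solved at the origin for $U(0,1,0)$, the $w_1$- and $\pi(f_1)$-actions by relabeling plus periodicity, and the $\pi(f_2)$-action via the D4 equation \eqref{eqn:ABS_U_0k} with $k=1$ (your base point $(0,0,-1)$ is the paper's base point $(1,1,0)$ shifted by the period vector, so the two computations coincide term by term, with $\al^{(1)}(0)\ka(-1)=\al^{(1)}(1)\ka(0)=qa_0b$). The only cosmetic difference is that the paper first records the action on the intermediate variables $\omega_0,\omega_1,\omega_2$ of \eqref{eqn:def_omega_2} and then passes to $f_1=\omega_0/\omega_1$, $f_2=\omega_1/\omega_2$, whereas you work with $f_1,f_2$ directly.
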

\begin{proof}
The action on the parameters is obvious.
Let us consider the action on the $f$ variables.
Define the three variables $\{\omega_0,\omega_1,\omega_2\}$ by
\begin{equation}\label{eqn:def_omega_2}
 \omega_0=U(0,0,0),\quad
 \omega_1=U(1,0,0),\quad
 \omega_2=U(1,1,0).
\end{equation}
Using the system \eqref{eqns:ABS_U}, we obtain
\begin{align}
 &s_1(\omega_1)
 =U(0,1,0)
 =U(1,0,0)\cfrac{\dfrac{\al^{(2)}(0)}{\al^{(1)}(0)}U(1,1,0)-U(0,0,0)}{U(1,1,0)-\dfrac{\al^{(2)}(0)}{\al^{(1)}(0)}U(0,0,0)}
 =\omega_1\dfrac{{a_1}^2\omega_2-\omega_0}{\omega_2-{a_1}^2\omega_0},\\
 &\pi(\omega_2)
 =U(2,1,0)
 =-\dfrac{U(1,1,1)U(1,1,0)}{\al^{(1)}(1)\ka(0)U(1,1,0)+U(2,1,1)}
 =-\dfrac{\omega_0\omega_2}{qa_0b\omega_2+\omega_1},
\end{align}
and thereby, we have
\begin{subequations}
\begin{align}
 &s_1:(\omega_0,\omega_1,\omega_2)\to
 \left(\omega_0,\,\omega_1\dfrac{{a_1}^2\omega_2-\omega_0}{\omega_2-{a_1}^2\omega_0},\,\omega_2\right),\\
 &\pi:(\omega_0,\omega_1,\omega_2)\to
 \left(\omega_1,\,\omega_2,\,-\dfrac{\omega_0\omega_2}{qa_0b\omega_2+\omega_1}\right),\\
 &w_1:(\omega_0,\omega_1,\omega_2)\to
 \left(\dfrac{1}{\omega_2},\,\dfrac{1}{\omega_1},\,\dfrac{1}{\omega_0}\right).
\end{align}
\end{subequations}
Then, the statement follows from 
\begin{equation}
 f_1=\dfrac{\omega_0}{\omega_1},\quad 
 f_2=\dfrac{\omega_1}{\omega_2}.
\end{equation}
\end{proof}

\begin{remark}
The action of $\widetilde{W}((A_1\times A_1)^{(1)})=\langle s_0,s_1,w_0,w_1\rangle\rtimes\langle r\rangle$ on the parameters $\{a_0,a_1,b,q\}$ and the variables $\{f_1,f_2\}$ given by \eqref{eqns:WA1_para_f} corresponds to the action of $\widetilde{\bm W}((A_1\times A_1')^{(1)})=\langle {\bm s}_0,{\bm s}_1,{\bm w}_0,{\bm w}_1\rangle\rtimes\langle{\bm \pi}\rangle$ on the parameters $\{{\bm a}_0,{\bm a}_1,{\bm b},{\bm q}\}$ and the variables $\{{\bm f}_0,{\bm f}_1,{\bm f}_2\}$ given in \cite{JNS2015:MR3403054} by the following correspondence:
\begin{subequations}
\begin{align}
 &{\bm s}_0=s_0,\quad
 {\bm s}_1=s_1,\quad
 {\bm w}_0=w_0,\quad
 {\bm w}_1=w_1,\quad
 {\bm \pi}=r,\\
 &{\bm a}_0={a_0}^4,\quad
 {\bm a}_1={a_1}^4,\quad
 {\bm b}=-{a_0}^6{a_1}^4b^2,\quad
 {\bm q}=q^4,\\
 &{\bm f}_0={a_0}^{-2}{a_1}^{-1}b^{-1}f_2,\quad
 {\bm f}_1={a_1}^{-1}b^{-1}f_1,\quad
 {\bm f}_2=\dfrac{{a_0}^2{a_1}^2b^2}{f_1f_2}.
\end{align}
\end{subequations}
As shown in \cite{JNS2015:MR3403054}, $\widetilde{\bm W}((A_1\times A_1')^{(1)})$ is the extended affine Weyl symmetry group for Sakai's $A_6^{(1)}$-surface, which gives the $q$-Painlev\'e II equation \eqref{eqn:intro_dPII}.
Note that $\widetilde{\bm W}((A_1\times A_1')^{(1)})$ is an extension of the direct product of the transformation groups $\langle {\bm s}_0,{\bm s}_1\rangle$ and $\langle {\bm w}_0,{\bm w}_1\rangle$ by the transformation ${\bm \pi}$. 
Furthermore, both $\langle {\bm s}_0,{\bm s}_1\rangle$ and $\langle {\bm w}_0,{\bm w}_1\rangle$ form the affine Weyl group of type $A_1^{(1)}$, and the transformation ${\bm \pi}$ corresponds to reflections of the two Dynkin diagrams of type $A_1^{(1)}$ associated with $\langle {\bm s}_0,{\bm s}_1\rangle$ and $\langle {\bm w}_0,{\bm w}_1\rangle$.
\end{remark}

Let us give the transformations corresponding to the shift operators $T_1$, $T_2$, $T_0$ in Remark \ref{remark:Ti_U} with $n=2$ using the elements of $\widetilde{W}((A_1\times A_1)^{(1)})$.
As noted in \eqref{eqn:Ti_Weyl}, $T_1$ and $T_2$ are given by
\begin{equation}
 T_1=\pi s_1,\quad
 T_2=\pi s_0,
\end{equation}
whose actions on the parameters $\{a_0,a_1,b,q\}$ are given by
\begin{subequations}
\begin{align}
 &T_1:(a_0,a_1,b,q)\to(qa_0,q^{-1}a_1,qb,q),\\
 &T_2:(a_0,a_1,b,q)\to(q^{-1}a_0,qa_1,qb,q).
\end{align}
\end{subequations}
As explained in Remark \ref{remark:Ti_U}, there is no element in $\widetilde{W}((A_1\times A_1)^{(1)})$ that fully corresponds to $T_0$.
Let
\begin{equation}
 \hT_0=\pi^{-2}.
\end{equation}
Its action on the periodically reduced $U$ variable is given by
\begin{equation}
 \hT_0(U(l_1,l_2,l_0))=U(l_1-1,l_2-1,l_0)=U(l_1,l_2,l_0+1),
\end{equation}
which is the same as the action of $T_0$.
Moreover, the actions of $T_0$ and $\hT_0$ on the parameters $\{a_0,a_1,b,q\}$ are also same as shown below.
\begin{subequations}\label{eqns:T0_para_n=2}
\begin{align}
 &\hT_0:(a_0,a_1,b,q)\to(a_0,a_1,q^{-2}b,q),\\
 &T_0:(a_0,a_1,b,q)\to(a_0,a_1,q^{-2}b,q).
\end{align}
\end{subequations}
Therefore, in what follows, we will not distinguish between $T_0$ and $\hT_0$ when considering their actions on the parameters $\{a_0,a_1,b,q\}$ and the $f$ variables.

\begin{remark}
Note that the action on the parameters $\big\{\al^{(1)}(0),\al^{(2)}(0),\ka(0)\big\}$ is different for $T_0$ and $\hT_0$ as follows.
\begin{subequations}
\begin{align}
 &\hT_0:\big(\al^{(1)}(0),\al^{(2)}(0),\ka(0)\big)\to\big(\al^{(1)}(-1),\al^{(2)}(-1),\ka(0)\big),\\
 &T_0:\big(\al^{(1)}(0),\al^{(2)}(0),\ka(0)\big)\to\big(\al^{(1)}(0),\al^{(2)}(0),\ka(1)\big).
\end{align}
\end{subequations}
\end{remark}

The action of $\hT_0$ (or $T_0$) on the $f$ variables is given by
\begin{equation}\label{eqn:T0_f_n=2}
 \hT_0(f_1)+\dfrac{1}{f_1}=\dfrac{1}{{a_1}^2}\left(f_2+\dfrac{1}{\hT_0(f_2)}\right),\quad
 \hT_0(f_2)+\dfrac{1}{f_2}=-\dfrac{a_1b}{f_1f_2},
\end{equation}
which is equivalent to the system \eqref{eqn:intro_dP_even} with $N=1$ by the following correspondence:
\begin{equation}\label{eqn:correspondence_introEqn_n=2}
 \overline{\rule{0em}{0.5em}~\,}=\hT_0,\quad
 F_i= f_i,\quad
 t=-b,\quad
 p=q^{-2}.
\end{equation}
Theorem \ref{theorem:symmetry_n=even} with $N=1$ follows from this fact and Lemma \ref{lemma:n=2_f_Weyl}.

\begin{remark}
We here consider the $q$-Painlev\'e equation given by $\hT_0$.
However, from the actions of $T_1$ and $T_2$, we can also obtain different $q$-Painlev\'e equations (see, for example \cite{JNS2015:MR3403054,nakazono2022discrete}).
In general, various discrete dynamical systems of Painlev\'e type can be obtained from elements of infinite order, not necessarily translations, in an (extended) affine Weyl group\cite{KNT2011:MR2773334,KN2015:MR3340349}.
\end{remark}

\subsection{Proof of Theorem \ref{theorem:symmetry_n=even} with $N\in\bbZ_{\geq2}$}\label{subsection:reduction_n=even}
In this subsection, we consider the periodic reduction of the system \eqref{eqns:ABS_U} in the case $n=2N$ $(N=2,3,\dots)$.
The discussion is the same as in \S \ref{subsection:reduction_n=2}, so the details are omitted.

Imposing the periodic condition
\begin{equation}\label{eqn:period_U_even}
 U(l_1+1,\dots,l_{2N}+1,l_0+1)=U(l_1,\dots,l_{2N},l_0),
\end{equation}
on the system \eqref{eqns:ABS_U}, we obtain the following conditions:
\begin{equation}
 \dfrac{\al^{(1)}(l_1+1)}{\al^{(1)}(l_1)}
 =\cdots
 =\dfrac{\al^{(2N)}(l_{2N}+1)}{\al^{(2N)}(l_{2N})}
 =\dfrac{\ka(l_0)}{\ka(l_0+1)},\quad
 \la(l)^4=1.
\end{equation}
Therefore, let
\begin{equation}\label{eqn:period_para_even}
 \al^{(i)}(l)=q^{2Nl}\al^{(i)}(0),\quad
 i=1,\dots,2N,\quad
 \ka(l)=q^{-2Nl}\ka(0),\quad
 \la(l)=1,
\end{equation}
where $q\in\bbC$ is a parameter.
Define the parameters $\{a_0,\dots,a_{2N-1},b\}$ and the variables $\{f_1,\dots,f_{2N}\}$ by
\begin{subequations}\label{eqns:def_para_f_even}
\begin{align}
 &a_0=\dfrac{\al^{(1)}(1)^{1/(2N)}}{\al^{(2N)}(0)^{1/(2N)}},\quad
 a_i=\dfrac{\al^{(i+1)}(0)^{1/(2N)}}{\al^{(i)}(0)^{1/(2N)}},~
 i=1,\dots,2N-1,\\
 &b=\left(\prod_{k=1}^{2N}\al^{(k)}(0)^{1/(2N)}\right)\ka(0),\quad
 f_j=\dfrac{\omega_{j-1}}{\omega_j},~
 j=1,\dots,2N,
\end{align}
\end{subequations}
where
\begin{equation}\label{eqn:def_omega_even}
 \omega_0=U(0,\dots,0),\quad
 \omega_i=\omega_{i-1}|_{l_i\to l_i+1},~ 
 i=1,\dots,2N.
\end{equation}
Then, the following holds:
\begin{equation}
 \prod_{i=0}^{2N-1}a_i=q.
\end{equation}
From the action \eqref{eqns:trans_general_U}, the action of $\widetilde{W}((A_{2N-1}\rtimes A_1)^{(1)})$ on the new parameters and $f$ variables is obtained as the following lemma.

\begin{lemma}\label{lemma:n=even_f_Weyl}
The action of $\widetilde{W}((A_{2N-1}\rtimes A_1)^{(1)})=\langle s_0,\dots,s_{2N-1},w_0,w_1\rangle\rtimes\langle r\rangle$ on the parameters $\{a_0,\dots,a_{2N-1},b,q\}$ is given by
\begin{subequations}\label{eqns:WA2N-1_para}
\begin{align}
 &s_i(a_j)
 =\begin{cases}
 {a_i}^{-1}&\text{if } j=i,\\
 a_ia_{i\pm 1}&\text{if } j=i\pm 1,\\
 a_j&\text{otherwise},
 \end{cases}\qquad
 s_i(b)=b,\quad
 s_i(q)=q,\\
 &\pi(a_j)=a_{j+1},\quad
 \pi(b)=qb,\quad
 \pi(q)=q,\\
 &w_1(a_j)=\dfrac{1}{a_{2N-j}},\quad
 w_1(b)=q^{2N}b,\quad
 w_1(q)=q^{-1},
\end{align}
\end{subequations}
where $i,j\in\bbZ/(2N)\bbZ$,
while that on the variables $\{f_1,\dots,f_{2N}\}$ is given by
\begin{subequations}\label{eqns:WA2N-1_f}
\begin{align}
 &s_i(f_j)
 =\begin{cases}
 f_i\dfrac{1-{a_i}^{2N}f_if_{i+1}}{{a_i}^{2N}-f_if_{i+1}}&\text{if } j=i,\\
 f_{i+1}\dfrac{{a_i}^{2N}-f_if_{i+1}}{1-{a_i}^{2N}f_if_{i+1}}&\text{if } j=i+1,\\
 f_j&\text{otherwise},
 \end{cases}\qquad
 i=1,\dots,2N-1,~
 j=1,\dots,2N,\\
 &\pi(f_j)
 =\begin{cases}
 f_{j+1}&\text{if } j=1,\dots,2N-1,\\
 -\dfrac{q^{2N-1}a_0b}{\left(\displaystyle\prod_{k=1}^{2(N-1)}{a_k}^{2N-k-1}\right)\left(\displaystyle\prod_{k=1}^{2N}f_k\right)}-\dfrac{1}{f_1}&\text{if } j=2N,
 \end{cases}\\
 &w_1(f_j)=f_{2N-j+1},\quad
 j=1,\dots,2N.
\end{align}
\end{subequations}
Note that the transformations $s_0$, $w_0$ and $r$ are defined by \eqref{eqn:s0w0w1rdef}.
Under the action above, the relations \eqref{eqns:fundamental_An-1A1} with $n=2N$ hold.
\end{lemma}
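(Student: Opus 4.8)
The plan is to mirror the proof of Lemma~\ref{lemma:n=2_f_Weyl}: I will lift the computation from the variables $\{f_1,\dots,f_{2N}\}$ to the variables $\{\omega_0,\dots,\omega_{2N}\}$ of \eqref{eqn:def_omega_even}, on which the generators act far more transparently, and then descend via $f_j=\omega_{j-1}/\omega_j$. The action \eqref{eqns:WA2N-1_para} on the parameters is immediate: inserting the definitions \eqref{eqns:def_para_f_even} together with the reduced parametrization \eqref{eqn:period_para_even} into the lifted action \eqref{eqns:trans_general_U} yields it at once, so the real content lies in the action on the $f$ variables.

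For the simple reflections $s_i$ $(i=1,\dots,2N-1)$, note that since $s_i$ swaps $l_i\leftrightarrow l_{i+1}$, it fixes every $\omega_j$ with $j\neq i$, because only $\omega_i$ has $l_i=1,\ l_{i+1}=0$ at its defining lattice point. I will evaluate the first (H3-type) equation of \eqref{eqns:ABS_U}, with $\la=1$ by \eqref{eqn:period_para_even}, for the index pair $(i,i+1)$ at the point where $l_i=l_{i+1}=0$, $l_k=1$ for $k<i$ and $l_k=0$ for $k>i+1$; there $U=\omega_{i-1}$, $U_{\ol i}=\omega_i$, $U_{\ol{i+1}}=s_i(\omega_i)$, $U_{\ol{ij}}=\omega_{i+1}$, and solving for $s_i(\omega_i)$ (using $a_i^{2N}=\al^{(i+1)}(0)/\al^{(i)}(0)$) gives $s_i(\omega_i)=\omega_i(\omega_{i-1}-a_i^{2N}\omega_{i+1})/(a_i^{2N}\omega_{i-1}-\omega_{i+1})$. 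Forming the relevant ratios then produces the stated $s_i(f_i)$ and $s_i(f_{i+1})$, with $s_i(f_j)=f_j$ otherwise.

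For $\pi$, the relation $\pi(\omega_j)=\omega_{j+1}$ holds for $j=0,\dots,2N-1$ directly from $\pi(U)=U(l_{2N}+1,l_1,\dots,l_{2N-1},l_0)$, giving $\pi(f_j)=f_{j+1}$ for $j<2N$. The only nontrivial case is $\pi(\omega_{2N})=U(2,1,\dots,1,0)$: I will apply the D4-type equation \eqref{eqn:ABS_U_0k} with $k=1$ at the point $(1,\dots,1,0)$ and then use the periodic condition \eqref{eqn:period_U_even} to identify $U(1,\dots,1,1)=\omega_0$ and $U(2,1,\dots,1,1)=\omega_1$, obtaining $\pi(\omega_{2N})=-\omega_0\omega_{2N}/(\al^{(1)}(1)\ka(0)\,\omega_{2N}+\omega_1)$ and hence $\pi(f_{2N})=-\al^{(1)}(1)\ka(0)/\prod_k f_k-1/f_1$. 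For $w_1$, from $w_1(U)=1/U(-l_{2N},\dots,-l_1,-l_0-1)$ together with a single application of \eqref{eqn:period_U_even} I will verify $w_1(\omega_j)=1/\omega_{2N-j}$, whence $w_1(f_j)=f_{2N-j+1}$. The relations \eqref{eqns:fundamental_An-1A1} are then inherited, since the $f$-action is induced by the $U$-action, which already realizes them by Lemma~\ref{lemma:lattice_Weyl}.

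The main obstacle is the parameter bookkeeping hidden in the coefficient $\al^{(1)}(1)\ka(0)$ of $\pi(f_{2N})$. A direct computation from \eqref{eqns:def_para_f_even} and \eqref{eqn:period_para_even} gives $\al^{(1)}(1)\ka(0)=q^{2N}b/\prod_{k=1}^{2N-1}a_k^{2N-k}$, and I must rewrite this as the form $q^{2N-1}a_0 b/\prod_{k=1}^{2(N-1)}a_k^{2N-k-1}$ displayed in \eqref{eqns:WA2N-1_f}. This uses the identity $\prod_{i=0}^{2N-1}a_i=q$ to trade one factor of $q$ for $a_0$ and to shift all the $a_k$-exponents by one; getting these exponents to match exactly is the only delicate point, as the $s_i$ and $w_1$ computations are otherwise routine extensions of the $N=1$ case treated in Lemma~\ref{lemma:n=2_f_Weyl}.
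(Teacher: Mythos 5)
Your proposal is correct and follows essentially the same route as the paper: the paper's proof likewise passes to the $\omega$ variables of \eqref{eqn:def_omega_even}, records the induced action of $s_i$, $\pi$, $w_1$ on them (your formulas for $s_i(\omega_i)$, $\pi(\omega_{2N})$ and $w_1(\omega_j)$ agree with the paper's, up to an overall sign in numerator and denominator and the substitution $\al^{(1)}(1)\ka(0)=q^{2N-1}a_0b/\prod_{k=1}^{2(N-1)}{a_k}^{2N-k-1}$, which you carry out correctly), and then descends via $f_j=\omega_{j-1}/\omega_j$. The extra detail you supply—which lattice equation is evaluated at which point, and the $q=\prod_{i=0}^{2N-1}a_i$ bookkeeping—is exactly the content the paper displays explicitly only in the $N=1$ case (Lemma \ref{lemma:n=2_f_Weyl}) and suppresses here.
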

\begin{proof}
The action on the parameters is obvious.
Therefore, we only consider the action on the $f$ variables.
The action on the variables $\{\omega_0,\dots,\omega_{2N}\}$ given in \eqref{eqn:def_omega_even} is given by
\begin{subequations}
\begin{align}
 &s_i(\omega_j)
 =\begin{cases}
 \omega_i\dfrac{{a_i}^{2N}\omega_{i+1}-\omega_{i-1}}{\omega_{i+1}-{a_i}^{2N}\omega_{i-1}}&\text{if } j=i,\\
 \omega_j&\text{otherwise},
 \end{cases}\quad
 i=1,\dots,2N-1,~
 j=0,\dots,2N,\\
 &\pi(\omega_j)
 =\begin{cases}
 \omega_{j+1}&\text{if } j=0,\dots,2N-1,\\
 -\dfrac{\left(\displaystyle\prod_{k=1}^{2(N-1)}{a_k}^{2N-k-1}\right)\omega_0\omega_{2N}}{q^{2N-1}a_0b\omega_{2N}+\left(\displaystyle\prod_{k=1}^{2(N-1)}{a_k}^{2N-k-1}\right)\omega_1}&\text{if } j=2N,
 \end{cases}\\
 &w_1(\omega_j)=\dfrac{1}{\omega_{2N-j}},\quad 
 j=0,\dots,2N.
\end{align}
\end{subequations}
Then, the statement follows from the relation between the $f$ variables and the $\omega$ variables given in \eqref{eqns:def_para_f_even}.
\end{proof}

Let us define the transformations $T_i$, $i=1,\dots,2N$, by \eqref{eqn:Ti_Weyl} and the transformation $\hT_0$ by
\begin{equation}
 \hT_0=\pi^{-2N}.
\end{equation}
The action of $T_1,\dots,T_{2N},\hT_0$ on the parameters $\{a_0,\dots,a_{2N-1},b,q\}$ is given by
\begin{subequations}
\begin{align}
 &T_i(a_j)
 =\begin{cases}
 qa_{i-1}&\text{if } j=i-1\quad ({\rm mod}~2N),\\
 q^{-1}a_i&\text{if } j=i\quad ({\rm mod}~2N),\\
 a_j&\text{otherwise},
 \end{cases}\qquad
 i=1,\dots,2N,\\
 &T_i:(b,q)\to(q b,q),\quad
 i=1,\dots,2N,\\
 &\hT_0:(a_0,\dots,a_{2N-1},b,q)\to(a_0,\dots,a_{2N-1},q^{-2N}b,q).
\end{align}
\end{subequations}

\begin{lemma}
The following holds:
\begin{equation}\label{eqn:dP_even}
 \hT_0(f_i)+\dfrac{1}{f_i}
 =\begin{cases}
  ~ \dfrac{1}{{a_i}^{2N}}\left(f_{i+1}+\dfrac{1}{\hT_0(f_{i+1})}\right)
  &\text{if } i=1,\dots,2N-1,\\[1em]
  ~ -\dfrac{\left(\displaystyle\prod_{k=1}^{2N-1}{a_k}^k\right)b}{\displaystyle\prod_{k=1}^{2N}f_k}
  &\text{if } i=2N.
 \end{cases}
\end{equation}
\end{lemma}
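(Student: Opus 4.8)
The plan is to read \eqref{eqn:dP_even} off Boll's D4 equation \eqref{eqn:ABS_U_0k} directly, between the slices $l_0=0$ and $l_0=1$, rather than by iterating the rather unwieldy action of $\pi^{-1}$ from Lemma \ref{lemma:n=even_f_Weyl}. Write $g_i:=\hT_0(\omega_i)$; since $\hT_0(U(l_1,\dots,l_{2N},l_0))=U(l_1,\dots,l_{2N},l_0+1)$ (as in \S\ref{subsection:reduction_n=2}, via $\hT_0=\pi^{-2N}$ and \eqref{eqn:period_U_even}), this is the $l_0=1$ companion of $\omega_i$, namely $g_i=U(\underbrace{1,\dots,1}_{i},0,\dots,0,1)$, and $\hT_0(f_i)=g_{i-1}/g_i$. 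After the reduction \eqref{eqn:period_para_even} we have $\la(l)=1$, so evaluating \eqref{eqn:ABS_U_0k} at $(l_1,\dots,l_{2N},l_0)=(\underbrace{1,\dots,1}_{i-1},0,\dots,0,0)$ with $k=i$, and reading off $U=\omega_{i-1}$, $U_{\ol{i}}=\omega_i$, $U_{\ol 0}=g_{i-1}$, $U_{\ol{0i}}=g_i$, gives the family
\begin{equation}
 \frac{g_i}{\omega_{i-1}}+\frac{g_{i-1}}{\omega_i}+\al^{(i)}(0)\ka(0)=0,\qquad i=1,\dots,2N. \tag{$\ast_i$}
\end{equation}
These relations will carry the whole argument.

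For $i=1,\dots,2N-1$ I would multiply $(\ast_i)$ and $(\ast_{i+1})$ through by $\omega_i/g_i$, obtaining
\begin{align}
 &\hT_0(f_i)+\frac{1}{f_i}=-\al^{(i)}(0)\ka(0)\frac{\omega_i}{g_i},\\
 &f_{i+1}+\frac{1}{\hT_0(f_{i+1})}=-\al^{(i+1)}(0)\ka(0)\frac{\omega_i}{g_i},
\end{align}
where I used $\omega_i/\omega_{i-1}=1/f_i$, $g_{i-1}/g_i=\hT_0(f_i)$, $\omega_i/\omega_{i+1}=f_{i+1}$ and $g_{i+1}/g_i=1/\hT_0(f_{i+1})$. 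Taking the ratio of these two identities cancels the common factor $\ka(0)\omega_i/g_i$ and leaves $\hT_0(f_i)+1/f_i=\big(\al^{(i)}(0)/\al^{(i+1)}(0)\big)\big(f_{i+1}+1/\hT_0(f_{i+1})\big)$; since $a_i^{2N}=\al^{(i+1)}(0)/\al^{(i)}(0)$ by \eqref{eqns:def_para_f_even}, this is exactly the first line of \eqref{eqn:dP_even}. It is worth noting that the H3 (lattice modified KdV) equations play no role here: the evolution in the $l_0$-direction is governed entirely by the D4 equations.

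For the boundary case $i=2N$ the same manipulation of $(\ast_{2N})$ gives $\hT_0(f_{2N})+1/f_{2N}=-\al^{(2N)}(0)\ka(0)\,\omega_{2N}/g_{2N}$, and it remains to identify the right-hand side. Two facts finish the computation: the telescoping product $\prod_{k=1}^{2N}f_k=\omega_0/\omega_{2N}$, and---this is where the periodicity enters---$g_{2N}=\hT_0(\omega_{2N})=U(1,\dots,1,1)=U(0,\dots,0,0)=\omega_0$ by \eqref{eqn:period_U_even}. Substituting $g_{2N}=\omega_0$ and $\omega_{2N}/\omega_0=1/\prod_{k=1}^{2N}f_k$ yields the second line of \eqref{eqn:dP_even}, provided the parameter identity $\al^{(2N)}(0)\ka(0)=\big(\prod_{k=1}^{2N-1}a_k^{\,k}\big)b$ holds. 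I expect this last step to be the only genuine obstacle. Expanding $\prod_{k=1}^{2N-1}a_k^{\,k}$ with $a_k=\big(\al^{(k+1)}(0)/\al^{(k)}(0)\big)^{1/(2N)}$, the exponent of $\al^{(j)}(0)$ telescopes to $-1/(2N)$ for $j=1,\dots,2N-1$ and to $(2N-1)/(2N)$ for $j=2N$; multiplying by $b=\big(\prod_{k=1}^{2N}\al^{(k)}(0)^{1/(2N)}\big)\ka(0)$ then collapses every factor and leaves precisely $\al^{(2N)}(0)\ka(0)$. The only real source of error is careless handling of the $1/(2N)$-th powers and of the empty-product convention, so I would track the exponent of each $\al^{(j)}(0)$ explicitly; everything else is direct substitution. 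As a consistency check, specializing to $N=1$ reproduces \eqref{eqn:T0_f_n=2}.
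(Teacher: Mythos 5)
Your proposal is correct, and it takes a genuinely different route from the paper's own proof. The paper never goes back to the lattice in this lemma: it starts from the explicit birational formula for $\pi(f_{2N})$ in Lemma \ref{lemma:n=even_f_Weyl}, writes ${\hT_0}^{-1}(f_1)+1/f_1=\pi(f_{2N})+1/f_1$ in closed form, applies $\hT_0$ to that identity, and then propagates it around the system by applying $\pi^{2N-1}$ (for the $i=2N$ case), $\pi$ (for the $i=1$ case), and further powers of $\pi$, each step requiring a re-indexing of the products of $a_k$'s. You instead return to Boll's D4 equation \eqref{eqn:ABS_U_0k} and read the relations off directly between the slices $l_0=0$ and $l_0=1$: your relations $(\ast_i)$ are correctly instantiated (with $\la\equiv1$ from \eqref{eqn:period_para_even}, $\al^{(i)}(l_i)=\al^{(i)}(0)$ and $\ka(l_0)=\ka(0)$ at the chosen lattice points), the ratio of $(\ast_i)$ and $(\ast_{i+1})$ gives the coupled equations for $i=1,\dots,2N-1$ in one stroke, with the factor $1/{a_i}^{2N}=\al^{(i)}(0)/\al^{(i+1)}(0)$ appearing for a transparent reason, and the closing equation $i=2N$ follows from the periodicity $g_{2N}=\hT_0(\omega_{2N})=U(1,\dots,1,1)=\omega_0$ together with the parameter identity $\al^{(2N)}(0)\ka(0)=\bigl(\prod_{k=1}^{2N-1}{a_k}^k\bigr)b$, whose exponent bookkeeping I checked and which is right. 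This is legitimate within the paper's framework because the birational action of $\hT_0=\pi^{-2N}$ on the $f$ variables is by definition induced from its action $U(l_1,\dots,l_{2N},l_0)\mapsto U(l_1,\dots,l_{2N},l_0+1)$ on the reduced $U$ values, which is exactly what you compute. What your approach buys is directness and structural clarity: the $l_0$-evolution is visibly governed by the D4 equations alone, and both the nearest-neighbour coupling and the boundary closure have manifest lattice origins; it also avoids the paper's repeated product manipulations. What the paper's approach buys is that, once Lemma \ref{lemma:n=even_f_Weyl} is in place, the proof stays entirely inside the birational representation without further reference to $U$, which fits the paper's emphasis on the Weyl group action as the primary object (and implicitly re-verifies the coherence of that action under powers of $\pi$).
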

\begin{proof}
The following holds:
\begin{equation}
 {\hT_0}^{~-1}(f_1)+\dfrac{1}{f_1}
 =\pi(f_{2N})+\dfrac{1}{f_1}
 =-\dfrac{q^{2N-1}a_0b}{\left(\displaystyle\prod_{k=1}^{2(N-1)}{a_k}^{2N-k-1}\right)\left(\displaystyle\prod_{k=1}^{2N}f_k\right)}.
\end{equation}
Applying $\hT_0$ to the equation above, we obtain
\begin{equation}\label{eqn:dP_2N_proof_1}
 f_1+\dfrac{1}{\hT_0(f_1)}
 =-\dfrac{q^{-1}a_0b}{\left(\displaystyle\prod_{k=1}^{2(N-1)}{a_k}^{2N-k-1}\right)\left(\displaystyle\prod_{k=1}^{2N}\hT_0(f_k)\right)}.
\end{equation}
Moreover, applying the transformation $\pi^{2N-1}$ to \eqref{eqn:dP_2N_proof_1}, we obtain
\begin{equation}
 f_{2N}+\dfrac{1}{\hT_0(f_{2N})}
 =-\dfrac{q^{2N-2}a_{2N-1}b}{\left(\displaystyle\prod_{k=1}^{2(N-1)}{a_{k-1}}^{2N-k-1}\right)\left(\displaystyle\prod_{k=1}^{2N-1}f_k\right)\hT_0(f_{2N})},
\end{equation}
which gives
\begin{equation}
 \hT_0(f_{2N})+\dfrac{1}{f_{2N}}
 =-\dfrac{q^{2N-1}b}{\left(\displaystyle\prod_{k=0}^{2(N-1)}{a_k}^{2N-k-1}\right)\left(\displaystyle\prod_{k=1}^{2N}f_k\right)}
 =-\dfrac{\left(\displaystyle\prod_{k=1}^{2N-1}{a_k}^k\right)b}{\displaystyle\prod_{k=1}^{2N}f_k}.
\end{equation}
The equation above is Equation \eqref{eqn:dP_even} when $i=2N$.
Furthermore, applying the transformation $\pi$ to \eqref{eqn:dP_2N_proof_1}, we obtain
\begin{align}
 f_2+\dfrac{1}{\hT_0(f_2)}
 &=-\dfrac{a_1b}{\left(\displaystyle\prod_{k=1}^{2(N-1)}{a_{k+1}}^{2N-k-1}\right)\left(\displaystyle\prod_{k=1}^{2N-1}\hT_0(f_{k+1})\right)f_1}\notag\\
 &=-\dfrac{a_1b\hT_0(f_1)}{\left(\displaystyle\prod_{k=1}^{2(N-1)}{a_{k+1}}^{2N-k-1}\right)\left(\displaystyle\prod_{k=1}^{2N}\hT_0(f_k)\right)f_1}\notag\\
 &={a_1}^{2N}\left(\hT_0(f_1)+\dfrac{1}{f_1}\right),
\end{align}
which gives
\begin{equation}\label{eqn:eqn:dP_2N_proof_2}
 \hT_0(f_1)+\dfrac{1}{f_1}=\dfrac{1}{{a_1}^{2N}}\left(f_2+\dfrac{1}{\hT_0(f_2)}\right).
\end{equation}
The equation above is Equation \eqref{eqn:dP_even} when $i=1$.
Then the statement follows by applying the transformations $\pi^m$, $m=1,\dots,N-2$, to Equation \eqref{eqn:eqn:dP_2N_proof_2}.
\end{proof}

The system \eqref{eqn:dP_even} is equivalent to the system \eqref{eqn:intro_dP_even} with $N\in\bbZ_{\geq2}$ by the following correspondence:
\begin{equation}\label{eqn:correspondence_introEqn_n=even}
 \overline{\rule{0em}{0.5em}~\,}=\hT_0,\quad
 F_i=f_i,\quad
 t=-b,\quad
 p=q^{-2N}.
\end{equation}
Therefore, from this fact and Lemma \ref{lemma:n=even_f_Weyl}, Theorem \ref{theorem:symmetry_n=even} with $N\in\bbZ_{\geq2}$ holds.

\subsection{Proof of Theorem \ref{theorem:symmetry_n=odd}}\label{subsection:reduction_n=odd}
In this subsection, we consider the periodic reduction of the system \eqref{eqns:ABS_U} in the case $n=2N+1$ $(N=1,2,\dots)$.
The discussion is the same as in \S \ref{subsection:reduction_n=2}, so the details are omitted.

Imposing the periodic condition
\begin{equation}\label{eqn:period_U_odd}
 U(l_1+1,\dots,l_{2N+1}+1,l_0+1)=U(l_1,\dots,l_{2N+1},l_0),
\end{equation}
on the system \eqref{eqns:ABS_U}, we obtain the following conditions:
\begin{equation}
 \dfrac{\al^{(1)}(l_1+1)}{\al^{(1)}(l_1)}
 =\cdots
 =\dfrac{\al^{(2N+1)}(l_{2N+1}+1)}{\al^{(2N+1)}(l_{2N+1})}
 =\dfrac{\ka(l_0)}{\ka(l_0+1)}.
\end{equation}
Therefore, let
\begin{equation}\label{eqn:period_para_odd}
 \al^{(i)}(l)=q^{(2N+1)l}\al^{(i)}(0),\quad
 i=1,\dots,2N+1,\quad
 \ka(l)=q^{-(2N+1)l}\ka(0),
\end{equation}
where $q\in\bbC$ is a parameter.
Define the parameters $\{a_0,\dots,a_{2N},b,c\}$ and the variables $\{f_1,\dots,f_{2N}\}$ by
\begin{subequations}\label{eqns:def_para_f_odd}
\begin{align}
 &a_0=\dfrac{\al^{(1)}(1)^{1/(2N+1)}}{\al^{(2N+1)}(0)^{1/(2N+1)}},\quad
 a_i=\dfrac{\al^{(i+1)}(0)^{1/(2N+1)}}{\al^{(i)}(0)^{1/(2N+1)}},~
 i=1,\dots,2N,\\
 &b=\left(\prod_{k=1}^{2N+1}\al^{(k)}(0)^{1/(2N+1)}\right)\ka(0),\quad
 \la(l)
 =\begin{cases}
 c&\text{if } l\in2\bbZ,\\
 c^{-1}&\text{otherwise},
 \end{cases}\\
 &f_j=\dfrac{\omega_{j-1}}{\omega_{j+1}},~
 j=1,\dots,2N,
\end{align}
\end{subequations}
where
\begin{equation}\label{eqn:def_omega_odd}
 \omega_0=U(0,\dots,0),\quad
 \omega_i=\omega_{i-1}|_{l_i\to l_i+1},~
 i=1,\dots,2N+1.
\end{equation}
Then, the following holds:
\begin{equation}
 \prod_{i=0}^{2N}a_i=q.
\end{equation}
From the action \eqref{eqns:trans_general_U}, the action of $\widetilde{W}((A_{2N}\rtimes A_1)^{(1)})$ on the new parameters and $f$ variables is obtained as the following lemma.

\begin{lemma}\label{lemma:n=odd_f_Weyl}
The action of $\widetilde{W}((A_{2N}\rtimes A_1)^{(1)})=\langle s_0,\dots,s_{2N},w_0,w_1\rangle\rtimes\langle r\rangle$ on the parameters $\{a_0,\dots,a_{2N},b,c,q\}$ is given by
\begin{subequations}\label{eqns:WA2N_para}
\begin{align}
 &s_i(a_j)
 =\begin{cases}
 {a_i}^{-1}&\text{if } j=i,\\
 a_ia_{i\pm 1}&\text{if } j=i\pm 1,\\
 a_j&\text{otherwise},
 \end{cases}\qquad
 s_i(b)=b,\quad
 s_i(c)=c,\quad
 s_i(q)=q,\\
 &\pi(a_i)=a_{i+1},\quad
 \pi(b)=qb,\quad
 \pi(c)=c^{-1},\quad
 \pi(q)=q,\\
 &w_1(a_i)=\dfrac{1}{a_{2N+1-i}},\quad
 w_1(b)=q^{2N+1}b,\quad
 w_1(c)=c^{-1},\quad
 w_1(q)=q^{-1},
\end{align}
\end{subequations}
where $i,j\in\bbZ/(2N+1)\bbZ$,
while that on the variables $\{f_1,\dots,f_{2N}\}$ is given by
\begin{subequations}\label{eqns:WA2N_f}
\begin{align}
 &s_i(f_j)
 =\begin{cases}
 f_{i-1}\dfrac{\la(i-1)^2-{a_i}^{2N+1}f_i}{{a_i}^{2N+1}\la(i-1)^2-f_i}&\text{if } j=i-1,\\
 f_{i+1}\dfrac{{a_i}^{2N+1}\la(i-1)^2-f_i}{\la(i-1)^2-{a_i}^{2N+1}f_i}&\text{if } j=i+1,\\
 f_j&\text{otherwise},
 \end{cases}\qquad
 i,j=1,\dots,2N,\\
 &\pi(f_j)
 =\begin{cases}
 f_{j+1}&\text{if } j=1,\dots,2N-1,\\
 -\dfrac{c^4}{\displaystyle\prod_{k=1}^N f_{2k-1}}
 \left(
 \displaystyle\prod_{k=1}^N f_{2k}+\dfrac{q^{2N}a_0b}{\left(\displaystyle\prod_{k=1}^{2N-1}{a_k}^{2N-k}\right)c}
 \right)&\text{if } j=2N,
 \end{cases}\\
 &w_1(f_j)=f_{2N+1-j},\quad
 j=1,\dots,2N.
\end{align}
\end{subequations}
Note that the transformations $s_0$, $w_0$ and $r$ are defined by \eqref{eqn:s0w0w1rdef}.
Under the action above, the relations \eqref{eqns:fundamental_An-1A1} with $n=2N+1$ hold.
\end{lemma}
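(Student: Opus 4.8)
The plan is to follow the template of Lemma~\ref{lemma:n=even_f_Weyl}: the action on the parameters $\{a_0,\dots,a_{2N},b,c,q\}$ is read off directly from the lifted action \eqref{eqns:trans_general_U} together with the definitions \eqref{eqns:def_para_f_odd}, so the entire content is the action on the $f$ variables. I would obtain this by first computing the action of each generator on the auxiliary variables $\omega_0,\dots,\omega_{2N+1}$ of \eqref{eqn:def_omega_odd}, and then transporting the result through $f_j=\omega_{j-1}/\omega_{j+1}$. Two features distinguish the present case from the even one. First, $\la$ is no longer trivial but equals $c^{\pm1}$ according to the parity of its argument, so the factors $\la(l_{0\cdots n})^2$ and $\la(l_{0\cdots n})^4$ in \eqref{eqns:ABS_U} genuinely survive the reduction. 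Second, the index shift in $f_j=\omega_{j-1}/\omega_{j+1}$ is by two, so a generator that alters the single variable $\omega_i$ moves $f_{i-1}$ and $f_{i+1}$ rather than $f_i$; this is exactly why $s_i$ acts nontrivially on $f_{i\mp1}$ in the statement.

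For the simple reflections $s_i$ I would evaluate the H3 relation (the first line of \eqref{eqns:ABS_U}) with indices $i,i+1$ at the base point whose lattice coordinates $l_1,\dots,l_{i-1}$ equal $1$ and all others equal $0$. There $U,\,U_{\overline i},\,U_{\overline{i+1}},\,U_{\overline{i,i+1}}$ specialize to $\omega_{i-1},\,\omega_i,\,s_i(\omega_i),\,\omega_{i+1}$ and the coordinate sum is $i-1$, so the surviving factor is exactly $\la(i-1)^2$; solving the resulting M\"obius relation for $s_i(\omega_i)$ and substituting $\omega_{i-1}=f_i\,\omega_{i+1}$ produces precisely the stated images of $f_{i-1}$ and $f_{i+1}$. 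The reflection $w_1$ is immediate: combining $w_1(U)=1/U(-l_n,\dots,-l_1,-l_0-1)$ with the periodic condition \eqref{eqn:period_U_odd} gives $w_1(\omega_j)=1/\omega_{2N+1-j}$, hence $w_1(f_j)=\omega_{2N-j}/\omega_{2N+2-j}=f_{2N+1-j}$.

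The substantial computation is the action of $\pi$. The relabelling gives $\pi(\omega_j)=\omega_{j+1}$ for $j=0,\dots,2N$, hence $\pi(f_j)=f_{j+1}$ for $j=1,\dots,2N-1$ at once, and only $\pi(f_{2N})=\omega_{2N}/\pi(\omega_{2N+1})$ requires work. I would obtain $\pi(\omega_{2N+1})=U(2,1,\dots,1,0)$ from the D4 relation \eqref{eqn:ABS_U_0k} with $k=1$ at the base point $(1,\dots,1,0)$, whose coordinate sum is the odd number $2N+1$, so that $\la(2N+1)=c^{-1}$ and $\la(2N+1)^4=c^{-4}$ appear, while the periodic condition collapses $U(1,\dots,1,1)$ and $U(2,1,\dots,1,1)$ to $\omega_0$ and $\omega_1$. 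After the telescopings $\prod_{k=1}^N f_{2k-1}=\omega_0/\omega_{2N}$ and $\prod_{k=1}^N f_{2k}=\omega_1/\omega_{2N+1}$, matching with the stated expression reduces to the single identity $\al^{(1)}(1)\ka(0)=q^{2N}a_0b/\prod_{k=1}^{2N-1}{a_k}^{2N-k}$, equivalently $\al^{(1)}(1)\ka(0)=q^{2N+1}\al^{(1)}(0)\ka(0)$, which I would check by expanding \eqref{eqns:def_para_f_odd}: the fractional powers of $\al^{(2)}(0),\dots,\al^{(2N)}(0)$ cancel and the power of $\al^{(1)}(0)$ sums to one. Finally, the relations \eqref{eqns:fundamental_An-1A1} with $n=2N+1$ are inherited from Lemma~\ref{lemma:lattice_Weyl}, the present action being a consistent lift of the underlying lattice automorphisms. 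I expect the main obstacle to be purely the bookkeeping: tracking the $c^{\pm1}$ factors produced by the nontrivial $\la$ through the H3 and D4 relations, and confirming the parameter identity that compresses the D4 output into the compact form stated for $\pi(f_{2N})$.
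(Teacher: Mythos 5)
Your proposal is correct and follows essentially the same route as the paper: the paper's proof likewise records the action of $s_i$, $\pi$, $w_1$ on the auxiliary variables $\omega_0,\dots,\omega_{2N+1}$ (obtained from the lifted action \eqref{eqns:trans_general_U} and the equations \eqref{eqns:ABS_U} under the periodic condition) and then transports it through $f_j=\omega_{j-1}/\omega_{j+1}$. Your explicit choices of base points for the H3 and D4 relations, the telescoping products, and the parameter identity $\al^{(1)}(1)\ka(0)=q^{2N}a_0b\big/\prod_{k=1}^{2N-1}{a_k}^{2N-k}$ all check out; they simply fill in the bookkeeping the paper omits by referring back to \S\ref{subsection:reduction_n=2}.
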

\begin{proof}
The action on the parameters is obvious.
Therefore, we only consider the action on the $f$ variables.
The action on the variables $\{\omega_0,\dots,\omega_{2N+1}\}$ given in \eqref{eqn:def_omega_odd} is given by
\begin{subequations}
\begin{align}
&s_i(\omega_j)
 =\begin{cases}
 \omega_i\dfrac{{a_i}^{2N+1}\la(i-1)^2\omega_{i+1}-\omega_{i-1}}{\la(i-1)^2\omega_{i+1}-{a_i}^{2N+1}\omega_{i-1}}&\text{if } j=i,\\
 \omega_j&\text{otherwise},
 \end{cases}\notag\\
 &\quad
 i=1,\dots,2N,~
 j=0,\dots,2N+1,\\
 &\pi(\omega_j)
 =\begin{cases}
 \omega_{j+1}&\text{if } j=0,\dots,2N,\\
 -\dfrac{\left(\displaystyle\prod_{k=1}^{2N-1}{a_k}^{2N-k}\right)\omega_0\omega_{2N+1}}{c^3\left(q^{2N}a_0b\omega_{2N+1}+\left(\displaystyle\prod_{k=1}^{2N-1}{a_k}^{2N-k}\right)c\omega_1\right)}&\text{if } j=2N+1,
 \end{cases}\\
 &w_1(\omega_j)=\dfrac{1}{\omega_{2N+1-j}},\quad 
 j=0,\dots,2N+1.
\end{align}
\end{subequations}
Then, the statement follows from the relation between the $f$ variables and the $\omega$ variables given in \eqref{eqns:def_para_f_odd}.
\end{proof}

\begin{remark}
In the case $N=1$, the action of $\widetilde{W}((A_2\rtimes A_1)^{(1)})=\langle s_0,s_1,s_2,w_0,w_1\rangle\rtimes\langle r\rangle$ on the parameters $\{a_0,a_1,a_2,b,c,q\}$ and the variables $\{f_1,f_2\}$ corresponds to the action of $\widetilde{\bm W}((A_2\rtimes A_1)^{(1)})=\langle {\bm w}_0,{\bm w}_1,{\bm w}_2,{\bm r}_0,{\bm r}_1\rangle\rtimes\langle {\bm \pi}\rangle$ on the parameters $\{{\bm b}_0,{\bm b}_1,{\bm b}_2,{\bm b}_3,{\bm p}\}$ and the variables $\{{\bm f}_1^{(1)},{\bm f}_1^{(2)},{\bm f}_1^{(3)}\}$ in \cite{JNS2016:MR3584386} by the following correspondence:
\begin{subequations}
\begin{align}
 &{\bm w}_0=s_0,\quad
 {\bm w}_1=s_2,\quad
 {\bm w}_2=s_1,\quad
 {\bm r}_0=w_1,\quad
 {\bm r}_1=w_0,\quad
 {\bm \pi}=r,\\
 &{\bm b}_0={a_1}^3{a_2}^3,\quad
 {\bm b}_1={a_2}^3,\quad
 {\bm b}_2=a_1{a_2}^2b,\quad
 {\bm b}_3=c^2,\quad
 {\bm p}=q^{-3},\\
 &{\bm f}_1^{(1)}=\dfrac{{a_1}^3c^3(q{a_0}^2a_2b+c f_2)}{{a_2}^3f_1},\quad
 {\bm f}_1^{(2)}=\dfrac{{a_1}^3(a_1{a_2}^2bc+f_1)}{{a_2}^3c^4f_2},\quad
 {\bm f}_1^{(3)}=-\dfrac{{a_0}^3}{{a_1}^3}f_2.
\end{align}
\end{subequations}
As shown in \cite{JNS2016:MR3584386}, $\widetilde{\bm W}((A_2\rtimes A_1)^{(1)})$ is a subgroup of the extended affine Weyl symmetry group of type $A_4^{(1)}$ for Sakai's $A_4^{(1)}$-surface, which gives the $q$-Painlev\'e V equation \eqref{eqn:intro_dPV}.
Note that $\widetilde{\bm W}((A_2\rtimes A_1)^{(1)})$ is an extension of the semi-direct product of the transformation groups $\langle {\bm w}_0,{\bm w}_1,{\bm w}_2\rangle$ and $\langle {\bm r}_0,{\bm r}_1\rangle$ by the transformation ${\bm \pi}$. 
Furthermore, $\langle {\bm w}_0,{\bm w}_1,{\bm w}_2\rangle$ and $\langle {\bm r}_0,{\bm r}_1\rangle$ respectively form the affine Weyl group of type $A_2^{(1)}$ and that of type $A_1^{(1)}$, and the transformation ${\bm \pi}$ corresponds to reflections of the Dynkin diagram of type $A_2^{(1)}$ associated with $\langle {\bm w}_0,{\bm w}_1,{\bm w}_2\rangle$ and that of type $A_1^{(1)}$ associated with $\langle {\bm r}_0,{\bm r}_1\rangle$.
\end{remark}

Let us define the transformations $T_i$, $i=1,\dots,2N+1$, by \eqref{eqn:Ti_Weyl} and the transformation $\hT_0$ by
\begin{equation}
 \hT_0=\pi^{-2N-1}.
\end{equation}
The action of $T_1,\dots,T_{2N+1},\hT_0$ on the parameters $\{a_0,\dots,a_{2N},b,c,q\}$ is given by
\begin{subequations}
\begin{align}
 &T_i(a_j)
 =\begin{cases}
 qa_{i-1}&\text{if } j=i-1\quad ({\rm mod}~2N+1),\\
 q^{-1}a_i&\text{if } j=i\quad ({\rm mod}~2N+1),\\
 a_j&\text{otherwise},
 \end{cases}\qquad
 i=1,\dots,2N+1,\\
 &T_i:(b,c,q)\to(q b,c^{-1},q),\quad
 i=1,\dots,2N+1,\\
 &\hT_0:(a_0,\dots,a_{2N},b,c,q)\to(a_0,\dots,a_{2N},q^{-2N-1}b,c^{-1},q).
\end{align}
\end{subequations}

\begin{lemma}
The following holds:
\begin{equation}\label{eqn:dP_odd}
 \dfrac{{a_i}^{2N+1}\Big(\hT_0(f_i)f_i-1\Big)}{{a_i}^{2N+1}-c^{2(-1)^i}f_i}
 =\begin{cases}
 ~ \dfrac{\hT_0(f_{i+1})f_{i+1}-1}{1-{a_{i+1}}^{2N+1}c^{2(-1)^i}\hT_0(f_{i+1})}
 &\text{if } i=1,\dots,2N-1,\\[1em]
 ~ \dfrac{\left(\displaystyle\prod_{k=1}^{2N}{a_k}^k\right)bc}{\displaystyle\prod_{k=1}^N f_{2k-1}}&\text{if } i=2N.
 \end{cases}
\end{equation}
\end{lemma}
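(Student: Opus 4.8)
The plan is to follow the template used to establish the even-order identity \eqref{eqn:dP_even}, feeding it the explicit action recorded in Lemma \ref{lemma:n=odd_f_Weyl} together with the relation $\hT_0=\pi^{-2N-1}$. Two features of that action drive the argument. On the variables $f_1,\dots,f_{2N}$ the map $\pi$ acts as the twisted cyclic shift $\pi(f_j)=f_{j+1}$ for $j=1,\dots,2N-1$, with the single inhomogeneous boundary value $\pi(f_{2N})$ given in \eqref{eqns:WA2N_f}; since $\hT_0$ is a power of $\pi$ it commutes with $\pi$, so these shifts may be iterated and intertwined with $\hT_0$ freely. The second, and decisive, feature is that $\pi(c)=c^{-1}$ and $\hT_0(c)=c^{-1}$: each application of $\pi$ inverts $c$, and this is exactly the mechanism that will reproduce the alternating factors $c^{2(-1)^i}$ in \eqref{eqn:dP_odd}.

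First I would build a single \emph{seed} relation anchored at $f_1$. Starting from the boundary formula for $\pi(f_{2N})$ and the fact that $\hT_0^{-1}(f_1)=\pi^{2N+1}(f_1)$ is reached by iterating the shift past the boundary, I would combine this relation with its image under $\hT_0$ to produce a global identity expressing the combination $\hT_0(f_1)f_1-1$ against a parameter monomial divided by a product of the $f$'s. The parity structure of the boundary formula — the product $\prod_{k=1}^{N}f_{2k}$ in the numerator against $\prod_{k=1}^{N}f_{2k-1}$ in the denominator — is what will ultimately leave only $\prod_{k=1}^{N}f_{2k-1}$ and the parameter monomial $\bigl(\prod_{k=1}^{2N}a_k^{\,k}\bigr)bc$ in the inhomogeneous equation.

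Next I would propagate the seed by applying the powers $\pi^{m}$. Applying $\pi^{2N-1}$ sends $f_1\mapsto f_{2N}$ and yields the inhomogeneous ($i=2N$) case of \eqref{eqn:dP_odd}, where $c^{2(-1)^{2N}}=c^{2}$. Applying $\pi$ instead sends $f_1\mapsto f_2$ and, because the global product on the right is forced to collapse by the boundary value $\pi(f_{2N})$, converts the global relation into the \emph{local} relation between $f_1$ and $f_2$, i.e.\ the $i=1$ case. Each further application of $\pi$ then simultaneously advances the index $i\mapsto i+1$ and flips the exponent of $c$, so that the sign pattern $2(-1)^i$ appears automatically and the chain of local equations $i=1,\dots,2N-1$ is generated in turn; the consistency of these iterations is guaranteed by the relations \eqref{eqns:fundamental_An-1A1}.

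The step I expect to be the main obstacle is the seed itself, because of the interaction between the $c$-bookkeeping and the \emph{multiplicative} form of the boundary term. In the even case the boundary value $\pi(f_{2N})$ is additive, of the shape (monomial)$\,+\,1/f_1$, so the $+1/f_1$ cancels cleanly and the seed is immediate; here $\pi(f_{2N})$ is instead a product over the odd-indexed subset, and moreover $\hT_0=\pi^{-2N-1}$ is off by one power of $\pi$ relative to the number $2N$ of variables, so reaching $\hT_0^{-1}(f_1)$ requires passing through the boundary twice rather than once. Verifying that, after these two passes, the alternating factors $c^{2(-1)^i}$, the subset product $\prod_{k=1}^{N}f_{2k-1}$, and the parameter monomials assemble into precisely the form displayed in \eqref{eqn:dP_odd} is the delicate computation; once the seed is in hand, the remaining propagation by $\pi$ is routine.
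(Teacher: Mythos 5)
Your seed and your treatment of the homogeneous equations are essentially the paper's own computation: the paper's \eqref{eqn:dP_2N+1_proof_1} is exactly your seed (anchored at $\hT_0^{-1}(f_1)=\pi^{2N+1}(f_1)$, requiring two boundary passes, as you correctly anticipate), its \eqref{eqn:dP_2N+1_proof_2} is your ``apply $\pi$ and let the boundary value collapse the product'' step (note this step also re-uses the seed itself, not just the boundary formula), and generating the cases $i=1,\dots,2N-1$ by further powers of $\pi$, with the sign $2(-1)^i$ produced automatically by $\pi(c)=c^{-1}$, matches the paper's propagation by $\pi^{\mp m}$. Re-anchoring from $\hT_0^{-1}$ to $\hT_0$ by composing with $\hT_0$ (a power of $\pi$) instead of by the reflection $w_1$ is legitimate for these local equations.

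The genuine gap is the inhomogeneous case $i=2N$: applying $\pi^{2N-1}$ to your $\hT_0$-anchored seed cannot yield it. Since $\hT_0$ is itself a power of $\pi$, every $\pi$-image of the seed retains the same structural pattern: the correction factor $1-{a_j}^{2N+1}c^{\pm 2}\hT_0(f_j)$ sits on the $\hT_0$-\emph{shifted} variable with exponent $+(2N+1)$, and the right-hand side involves the even-indexed product, which after boundary crossings becomes entangled with further $\hT_0$-shifted variables. The target equation has the opposite pattern: the correction ${a_{2N}}^{2N+1}-c^{2}f_{2N}$ sits on the \emph{unshifted} variable with effective exponent $-(2N+1)$ (and with $c^{+2}$, whereas $\pi^{2N-1}$ produces $c^{-2}$), and its right-hand side is the clean odd-indexed product $\prod_{k=1}^{N}f_{2k-1}$. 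In the even case this flip is elementary, via $(x+1/y)\,y/x=y+1/x$; the multiplicative form here admits no such identity, so powers of $\pi$ alone cannot perform it. This is precisely where the paper invokes $w_1$: from $(\pi w_1)^2=1$ one gets $w_1\hT_0 w_1=\hT_0^{-1}$, and $w_1(a_j)={a_{2N+1-j}}^{-1}$, $w_1(f_j)=f_{2N+1-j}$, $w_1(c)=c^{-1}$, $w_1(b)=q^{2N+1}b$, so a single application of $w_1$ to the seed simultaneously re-anchors it at $\hT_0$, inverts the $a$-exponent, and converts $\prod f_{2k}$ into $\prod f_{2k-1}$ --- exactly the three mismatches above. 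A symptom of the trouble: for $N=1$ your plan asks the single map $\pi^{2N-1}=\pi$ to yield both the $i=1$ and the $i=2$ equations; one can in fact extract both from the seed and its $\pi$-image, but only by two different nontrivial eliminations (substituting the seed into its $\pi$-image in two ways), which your plan does not supply. So either add $w_1$ to your toolkit, as the paper does, or replace the $i=2N$ step by an explicit elimination argument; as written, that step fails.
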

\begin{proof}
The following holds:
\begin{align}
 \pi^2(f_{2N})f_1-1
 &=\pi\left(
 -\dfrac{c^4\left(\displaystyle\prod_{k=1}^N f_{2k}\right)}{\displaystyle\prod_{k=1}^N f_{2k-1}}
 -\dfrac{q^{2N}a_0bc^3}{\left(\displaystyle\prod_{k=1}^{2N-1}{a_k}^{2N-k}\right)\left(\displaystyle\prod_{k=1}^N f_{2k-1}\right)}
 \right)f_1-1\notag\\
 &=-\left(
 \left(\displaystyle\prod_{k=1}^{N-1} f_{2k+1}\right)\pi(f_{2N})
 +\dfrac{q^{2N+1}a_1bc}{\displaystyle\prod_{k=1}^{2N-1}{a_{k+1}}^{2N-k}}
 \right)
 \dfrac{f_1}{c^4\left(\displaystyle\prod_{k=1}^N f_{2k}\right)}-1\notag\\
 &=\left(
 \dfrac{a_0c^2}{\displaystyle\prod_{k=1}^{2N-1}{a_k}^{2N-k}}
 -\dfrac{qa_1 f_1}{\displaystyle\prod_{k=1}^{2N-1}{a_{k+1}}^{2N-k}}
 \right)
 \dfrac{q^{2N}b}{c^3\left(\displaystyle\prod_{k=1}^N f_{2k}\right)}\notag\\ 
 &=\left(
 \dfrac{c^2}{\displaystyle\prod_{k=1}^{2N-1}{a_k}^{2N-k}}
 -\dfrac{{a_1}^2 f_1}{\displaystyle\prod_{k=1}^{2N-2}{a_{k+1}}^{2N-k-1}}
 \right)
 \dfrac{q^{2N}a_0b}{c^3\left(\displaystyle\prod_{k=1}^N f_{2k}\right)}\notag\\
 &=\dfrac{q^{2N}a_0b(1-{a_1}^{2N+1}c^{-2}f_1)}{c\left(\displaystyle\prod_{k=1}^{2N-1}{a_k}^{2N-k}\right)\left(\displaystyle\prod_{k=1}^N f_{2k}\right)},
\end{align}
which gives
\begin{equation}\label{eqn:dP_2N+1_proof_1}
 \dfrac{1-{a_1}^{2N+1}c^{-2}f_1}{\pi^2(f_{2N})f_1-1}
 =\dfrac{c}{q^{2N}a_0b}\left(\displaystyle\prod_{k=1}^{2N-1}{a_k}^{2N-k}\right)\left(\displaystyle\prod_{k=1}^N f_{2k}\right).
\end{equation}
Applying $\pi$ to \eqref{eqn:dP_2N+1_proof_1}, we obtain
\begin{align}\label{eqn:dP_2N+1_proof_2}
 \dfrac{1-{a_2}^{2N+1}c^2f_2}{\pi^3(f_{2N})f_2-1}
 &=\dfrac{1}{q^{2N+1}a_1bc}\left(\displaystyle\prod_{k=1}^{2N-1}{a_{k+1}}^{2N-k}\right)\left(\displaystyle\prod_{k=1}^{N-1} f_{2k+1}\right)\pi(f_{2N})\notag\\
 &=\dfrac{1}{q^{2N+1}a_1bc}
 \left(\displaystyle\prod_{k=2}^{2N}{a_{k}}^{2N-k+1}\right)
 \left(\displaystyle\prod_{k=1}^{N-1} f_{2k+1}\right)
 \pi(f_{2N})\notag\\
 &=-\dfrac{c^3}{q^{2N+1}a_1bf_1}\left(\displaystyle\prod_{k=2}^{2N}{a_{k}}^{2N-k+1}\right)\left(\displaystyle\prod_{k=1}^N f_{2k}\right)
 -\dfrac{c^2}{{a_1}^{2N+1}f_1}\notag\\
 &=\dfrac{1-{a_1}^{-2N-1}c^2\pi^2(f_{2N})}{\pi^2(f_{2N})f_1-1}.
\end{align}
Moreover, applying the transformation $w_1$ to \eqref{eqn:dP_2N+1_proof_1} and \eqref{eqn:dP_2N+1_proof_2}, we obtain
\begin{subequations}
 \begin{align}
 &\dfrac{1-{a_{2N}}^{-2N-1}c^2f_{2N}}{\pi^{-2}(f_1)f_{2N}-1}
 =\dfrac{a_0\left(\displaystyle\prod_{k=1}^N f_{2N+1-2k}\right)}{qbc\left(\displaystyle\prod_{k=1}^{2N-1}{a_{2N+1-k}}^{2N-k}\right)}
 =\dfrac{\displaystyle\prod_{k=1}^N f_{2k-1}}{\left(\displaystyle\prod_{k=1}^{2N}{a_k}^k\right)bc},\\
 &\dfrac{c^2-{a_{2N-1}}^{-2N-1}f_{2N-1}}{\pi^{-3}(f_1)f_{2N-1}-1}
 =\dfrac{c^2-{a_{2N}}^{2N+1}\pi^{-2}(f_1)}{\pi^{-2}(f_1)f_{2N}-1},
\end{align}
\end{subequations}
respectively.
Then, using 
\begin{equation}
 \pi^{-2}(f_1)=\hT_0(f_{2N}),\quad
 \pi^{-3}(f_1)=\hT_0(f_{2N-1}),
\end{equation}
we obtain
\begin{subequations}
 \begin{align}
 &\dfrac{{a_{2N}}^{2N+1}\Big(\hT_0(f_{2N})f_{2N}-1\Big)}{{a_{2N}}^{2N+1}-c^2f_{2N}}
 =\dfrac{\left(\displaystyle\prod_{k=1}^{2N}{a_k}^k\right)bc}{\displaystyle\prod_{k=1}^N f_{2k-1}},\\
 &\dfrac{{a_{2N-1}}^{2N+1}\Big(\hT_0(f_{2N-1})f_{2N-1}-1\Big)}{{a_{2N-1}}^{2N+1}-c^{-2}f_{2N-1}}
 =\dfrac{\hT_0(f_{2N})f_{2N}-1}{1-{a_{2N}}^{2N+1}c^{-2}\hT_0(f_{2N})}.
 \label{eqn:dP_2N+1_proof_3}
\end{align}
\end{subequations}
The equations above are the $i=2N,\,2N-1$ cases of \eqref{eqn:dP_odd}, respectively.
Then the statement follows by applying the transformations $\pi^{-m}$, $m=1,\dots,2N-2$, to Equation \eqref{eqn:dP_2N+1_proof_3}.
\end{proof}

The system \eqref{eqn:dP_odd} is equivalent to the system \eqref{eqn:intro_dP_odd} by the following correspondence:
\begin{equation}\label{eqn:correspondence_introEqn_n=odd}
 \overline{\rule{0em}{0.5em}~\,}=\hT_0,\quad
 G_i=f_i,\quad
 t=b,\quad
 p=q^{-2N-1}.
\end{equation}
Therefore, from this fact and Lemma \ref{lemma:n=odd_f_Weyl}, Theorem \ref{theorem:symmetry_n=odd} holds.

\section{Lax pairs of the $q$P$^{(2N)}(A_6^{(1)})$ \eqref{eqn:intro_dP_even} and the $q$P$^{(2N)}(A_4^{(1)})$ \eqref{eqn:intro_dP_odd}}\label{section:Lax_eqns}
In this section, we prove Theorems \ref{theorem:lax_n=even} and \ref{theorem:lax_n=odd}, that is, we construct Lax pairs of the systems \eqref{eqn:intro_dP_even} and \eqref{eqn:intro_dP_odd}. 
For detailed construction methods of a Lax pair of a Painlev\'e type difference equation from a higher dimensional CAC system using a periodic-reduction, see, for example, \cite{JN2016:MR3597921,JNS2016:MR3584386}. 

\subsection{Proof of Theorem \ref{theorem:lax_n=even} with $N=1$}\label{subsection:proof_Lax_n=2}
Under the conditions \eqref{eqn:period_U_2} and \eqref{eqn:period_para_2}, we consider the Lax equations \eqref{eqns:Weyl_phi} when $n=2$.
Define the column vector of length two $\Phi$, the spectral variable $x$, and the spectral operator $T_x$ as
\begin{equation}
 \Phi
 =\begin{pmatrix}
 {\omega_0}^{-2}&0\\0&1
 \end{pmatrix}
 \phi(0,0,0),\quad
 x=\al^{(1)}(0)^{-1/2}\al^{(2)}(0)^{-1/2}\mu,\quad
 T_x=T_1T_2T_0,
\end{equation}
where $T_i$, $i=0,1,2$, are given in Remarks \ref{remark:Ti_U} and \ref{remark:Ti_phi} and
\begin{equation}
 \omega_0=U(0,0,0).
\end{equation}
From \eqref{eqns:Weyl_phi}, we have
\begin{subequations}
\begin{align}
 &T_i(\Phi)
 =\begin{pmatrix}
 \dfrac{\mu \omega_0}{\al^{(i)}(0)T_i(\omega_0)}&-1\\1&-\dfrac{\mu T_i(\omega_0)}{\al^{(i)}(0)\omega_0}
 \end{pmatrix}
 \Phi,\quad
 i=1,2,\\
 &T_0(\Phi)
 =\begin{pmatrix}
 -\dfrac{\mu\ka(0)\omega_0}{T_0(\omega_0)}&-1\\1&0
 \end{pmatrix}
 \Phi.
\end{align}
\end{subequations}
Therefore, we obtain
\begin{subequations}\label{eqns:TxT0_Phi_n=2}
\begin{align}
 &T_x(\Phi)
 =\begin{pmatrix}
 -\dfrac{b x}{f_1f_2}&-1\\1&0
 \end{pmatrix}
 \begin{pmatrix}
 \dfrac{x f_2}{a_1}&-1\\1&-\dfrac{x}{a_1 f_2}
 \end{pmatrix}
 \begin{pmatrix}
 a_1x f_1&-1\\1&-\dfrac{a_1x}{f_1}
 \end{pmatrix}
 \Phi,\\
 &T_0(\Phi)
 =\begin{pmatrix}
 -\dfrac{b x}{T_0(f_1)T_0(f_2)}&-1\\1&0
 \end{pmatrix}
 \Phi.
\end{align}
\end{subequations}
The action of $T_x$ on $\{a_0,a_1,b,q,f_1,f_2\}$ is given by
\begin{equation}
 T_x:(a_0,a_1,b,q,f_1,f_2)\to(a_0,a_1,b,q,f_1,f_2),
\end{equation}
which is trivial, but that on the spectral parameter $x$ is not trivial as the following:
\begin{equation}
 T_x(x)=q^{-2}x.
\end{equation}
On the other hand, the action of $T_0$ on $\{a_0,a_1,b,q,x\}$ is given by
\begin{equation}
 T_0:(a_0,a_1,b,q,x)\to(a_0,a_1,q^{-2}b,q,x),
\end{equation}
and that on $\{f_1,f_2\}$ is given by \eqref{eqn:T0_f_n=2}.
Therefore, Theorem \ref{theorem:lax_n=even} with $N=1$ follows from \eqref{eqn:correspondence_introEqn_n=2}, \eqref{eqns:TxT0_Phi_n=2} and the following correspondence:
\begin{equation}\label{eqn:Phi_correspondence_n=even}
 \Phi(x,t)=\Phi,\quad
 \Phi(px,t)=T_x(\Phi),\quad
 \Phi(x,pt)=T_0(\Phi).
\end{equation}

\begin{remark}
As mentioned in \S \ref{subsection:reduction_n=2}, the action of $\hT_0$ and that of $T_0$ on the parameters $\{a_0,a_1,b,q\}$ and the variables $\{f_1,f_2\}$ are same, but their actions on the parameters $\{\al^{(1)}(0),\al^{(2)}(0),\ka(0)\}$ are different. 
This difference gives their different actions on the spectral parameter $x$ as follows.
\begin{equation}
 \hT_0(x)=q^2x,\quad
 T_0(x)=x,
\end{equation}
which leads to the spectral operator $T_x$.
Indeed, $T_x$ can also be written as
\begin{equation}
 T_x={\hT_0}^{~-1}T_0.
\end{equation}
This idea was proposed in \cite{JN2016:MR3597921}.
It is an extension of the ideas presented in \cite{NP1991:MR1098879,GRSWC2005:MR2117991,FJN2008:MR2425981,HHJN2007:MR2303490}, where Lax pairs of discrete Painlev\'e equations are constructed from two-dimensional partial difference equations by using staircase methods.
\end{remark}

\subsection{Proof of Theorem \ref{theorem:lax_n=even} with $N\in\bbZ_{\geq2}$}\label{subsection:proof_Lax_n=even}
Under the conditions \eqref{eqn:period_U_even} and \eqref{eqn:period_para_even}, we consider the Lax equation \eqref{eqns:Weyl_phi} when $n=2N$ $(N=2,3,\dots)$.
The discussion is the same as in \S \ref{subsection:proof_Lax_n=2}, so the details are omitted.

We define the column vector of length two $\Phi$, the spectral variable $x$, and the spectral operator $T_x$ as
\begin{equation}
 \Phi
 =\begin{pmatrix}
 {\omega_0}^{-2}&0\\0&1
 \end{pmatrix}
 \phi(0,\dots,0),\quad
 x=\left(\prod_{k=1}^{2N}\al^{(k)}(0)^{-1/(2N)}\right)\mu,\quad
 T_x=T_1\cdots T_{2N}T_0,
\end{equation}
where $T_i$, $i=0,\dots,2N$, are given in Remarks \ref{remark:Ti_U} and \ref{remark:Ti_phi} and
\begin{equation}
 \omega_0=U(0,\dots,0).
\end{equation}
From \eqref{eqns:Weyl_phi}, we have
\begin{subequations}
\begin{align}
 &T_i(\Phi)
 =\begin{pmatrix}
 \dfrac{\mu \omega_0}{\al^{(i)}(0)T_i(\omega_0)}&-1\\1&-\dfrac{\mu T_i(\omega_0)}{\al^{(i)}(0)\omega_0}
 \end{pmatrix}
 \Phi,\quad
 i=1,\dots,2N,\\
 &T_0(\Phi)
 =\begin{pmatrix}
 -\dfrac{\mu\ka(0)\omega_0}{T_0(\omega_0)}&-1\\1&0
 \end{pmatrix}
 \Phi.
\end{align}
\end{subequations}
Therefore, we obtain
\begin{equation}\label{eqn:TxT0_Phi_n=even}
 T_x(\Phi)
 =\begin{pmatrix}
 -\dfrac{b x}{~\displaystyle\prod_{k=1}^{2N}f_k~}&-1\\1&0
 \end{pmatrix}
 L_{2N}\dots L_1
 \Phi,\quad
 T_0(\Phi)
 =\begin{pmatrix}
 -\dfrac{b x}{~\displaystyle\prod_{k=1}^{2N}T_0(f_k)~}&-1\\1&0
 \end{pmatrix}
 \Phi,
\end{equation}
where
\begin{equation}
 L_i=\begin{pmatrix}
 \dfrac{\left(\displaystyle\prod_{k=1}^{2N-i}{a_{2N-k}}^k\right)xf_i}{\displaystyle\prod_{k=1}^{i-1}{a_k}^k}&-1\\
 1&-\dfrac{\left(\displaystyle\prod_{k=1}^{2N-i}{a_{2N-k}}^k\right)x}{\left(\displaystyle\prod_{k=1}^{i-1}{a_k}^k\right)f_i}
 \end{pmatrix},\quad
 i=1,\dots,2N.
\end{equation}
The actions of $T_x$ and $T_0$ on the parameters $\{a_0,\dots,a_{2N-1},b,x,q\}$ are given by
\begin{subequations}
\begin{align}
 &T_x:(a_0,\dots,a_{2N-1},b,x,q)\to (a_0,\dots,a_{2N-1},b,q^{-2N}x,q),\\
 &T_0:(a_0,\dots,a_{2N-1},b,x,q)\to (a_0,\dots,a_{2N-1},q^{-2N}b,x,q).
\end{align}
\end{subequations}
Note that the action of $T_x$ on the $f$ variables is trivial, and that of $T_0$ is given by \eqref{eqn:dP_even}.
Therefore, Theorem \ref{theorem:lax_n=even} with $N\in\bbZ_{\geq2}$ follows from \eqref{eqn:TxT0_Phi_n=even} and the correspondences \eqref{eqn:correspondence_introEqn_n=even} and \eqref{eqn:Phi_correspondence_n=even}.

\subsection{Proof of Theorem \ref{theorem:lax_n=odd}}\label{subsection:proof_Lax_n=odd}
Under the conditions \eqref{eqn:period_U_odd} and \eqref{eqn:period_para_odd}, we consider the Lax equation \eqref{eqns:Weyl_phi} when $n=2N+1$ $(N=1,2,\dots)$.
The discussion is the same as in \S \ref{subsection:proof_Lax_n=2}, so the details are omitted.
Note that the result for $N=1$ is the same as that in \cite{JNS2016:MR3584386}.

We define the column vector of length two $\Phi$, the spectral variable $x$, and the spectral operator $T_x$ as
\begin{subequations}
\begin{align}
 &\Psi
 =\begin{pmatrix}
 {\omega_{2N+1}}^{-1}&0\\0&\omega_0
 \end{pmatrix}
 \phi(0,\dots,0,-1),\quad
 x=\left(\prod_{k=1}^{2N+1}\al^{(k)}(0)^{-1/(2N+1)}\right)\mu,\\
 &T_x=T_0T_1\cdots T_{2N+1},
\end{align}
\end{subequations}
where $T_i$, $i=0,\dots,2N+1$, are given in Remarks \ref{remark:Ti_U} and \ref{remark:Ti_phi}.
Here, the variables $\omega_i$, $i=0,\dots,2N+1$, are given by \eqref{eqn:def_omega_odd}.
From \eqref{eqns:Weyl_phi}, we have
\begin{subequations}
\begin{align}
 &
 T_i(\Psi)
 =
 \begin{pmatrix}
 \dfrac{\mu }{\al^{(i)}(0)}&-\dfrac{cT_i(\omega_{2N+1})}{\omega_0}\\[1em]
 \dfrac{T_i(\omega_0)}{c\omega_{2N+1}}&-\dfrac{\mu T_i(\omega_0)T_i(\omega_{2N+1})}{\al^{(i)}(0)\omega_0\omega_{2N+1}}
 \end{pmatrix}
 \Psi,\quad
 i=1,\dots,2N+1,\\
 &T_0(\Psi)
 =\begin{pmatrix}
 -\mu\ka(-1)&-\dfrac{1}{c^2}\\[1em]
 \dfrac{c^2T_0(\omega_0)}{\omega_{2N+1}}&0
 \end{pmatrix}
 \Psi.
\end{align}
\end{subequations}
Therefore, we obtain
\begin{equation}
 T_x(\Psi)=L_{2N+1}\dots L_0\Psi,\quad
 T_0(\Psi)
 =L_0
 \Psi,
\end{equation}
where
\begin{subequations}
\begin{align}
 L_0
 &=\begin{pmatrix}
 -\mu\ka(-1)&-\dfrac{1}{c^2}\\[1em]
 \dfrac{c^2T_0(\omega_0)}{\omega_{2N+1}}&0
 \end{pmatrix}
 =-\begin{pmatrix}\dfrac{1}{c}&0\\0&-\dfrac{T_0(\omega_0)}{\omega_1}\end{pmatrix}
 \begin{pmatrix}\mu\ka(-1)c&\dfrac{1}{c}\\\dfrac{c^2\omega_1}{\omega_{2N+1}}&0\end{pmatrix},\\
 L_i
 &=\begin{pmatrix}
 \dfrac{\mu}{\al^{(i)}(0)}&-\dfrac{c^{(-1)^i}\omega_i}{T_0(\omega_{i-1})}\\[1em]
 \dfrac{T_0(\omega_i)}{c^{(-1)^i}\omega_{i-1}}&-\dfrac{\mu\omega_iT_0(\omega_i)}{\al^{(i)}(0)\omega_{i-1}T_0(\omega_{i-1})}
 \end{pmatrix}\notag\\
 &=\begin{pmatrix}c^{(-1)^i}&0\\0&\dfrac{T_0(\omega_i)}{\omega_{i-1}}\end{pmatrix}
 \begin{pmatrix}\dfrac{\mu}{\al^{(i)}(0)}&1\\1&\dfrac{\mu}{\al^{(i)}(0)}\end{pmatrix}
 \begin{pmatrix}\dfrac{1}{c^{(-1)^i}}&0\\0&-\dfrac{\omega_i}{T_0(\omega_{i-1})}\end{pmatrix},\notag\\
 &\quad i=1,\dots,2N+1.
\end{align}
\end{subequations}
In the following, let us rewrite the equations above in notation with the parameters $\{a_0,\dots,a_{2N},b,c,x\}$ and the $f$ variables.
Letting
\begin{subequations}
\begin{align}
 &M_i=\begin{pmatrix}\dfrac{\mu}{\al^{(i)}(0)}&1\\1&\dfrac{\mu}{\al^{(i)}(0)}\end{pmatrix}
 =\begin{pmatrix}
 \dfrac{\displaystyle\prod_{k=i}^{2N}{a_k}^{2N+1}}{\displaystyle\prod_{k=1}^{2N}{a_k}^k}x&1\\
 1&\dfrac{\displaystyle\prod_{k=i}^{2N}{a_k}^{2N+1}}{\displaystyle\prod_{k=1}^{2N}{a_k}^k}x
 \end{pmatrix}, 
 &&i=1,\dots,2N+1,\\
 &K_j=\begin{pmatrix}c^{2(-1)^j}&0\\0&-\dfrac{\omega_{j+1}}{\omega_{j-1}}\end{pmatrix}
 =\begin{pmatrix}c^{2(-1)^j}&0\\0&-\dfrac{1}{f_j}\end{pmatrix},
 &&j=1,\dots,2N,
\end{align}
\end{subequations}
we obtain 
\begin{align}\label{eqn:TxT0_Phi_n=odd_1}
 T_x(\Psi)
 =&-\begin{pmatrix}\dfrac{1}{c}&0\\0&\dfrac{\omega_0}{\omega_{2N}}\end{pmatrix}
 M_{2N+1}K_{2N}M_{2N}\dots K_{1}M_{1}
 \begin{pmatrix}\mu\ka(-1)c&\dfrac{1}{c}\\\dfrac{c^2\omega_1}{\omega_{2N+1}}&0\end{pmatrix}
 \Psi\notag\\
 =&-\begin{pmatrix}\dfrac{1}{c}&0\\0&\displaystyle\prod_{k=1}^Nf_{2k-1}\end{pmatrix}
 M_{2N+1}K_{2N}M_{2N}\dots K_{1}M_{1}
 \begin{pmatrix}q^{2N+1}bcx&\dfrac{1}{c}\\c^2\left(\displaystyle\prod_{k=1}^Nf_{2k}\right)&0\end{pmatrix}
 \Psi.
\end{align}
Moreover, from \eqref{eqn:ABS_U_0k} we obtain
\begin{equation}
-\dfrac{T_0(\omega_0)}{\omega_1}
=\dfrac{T_0(\omega_1)}{c^4\omega_0}+\dfrac{\al^{(1)}(0) \ka(0)}{c^3}
=\dfrac{1}{c^4}\left(\displaystyle\prod_{k=1}^NT_0(f_{2k})\right)+\dfrac{b}{c^3}\left(\displaystyle\prod_{k=1}^{2N}{a_k}^{k-2N-1}\right),
\end{equation}
which gives
\begin{equation}\label{eqn:TxT0_Phi_n=odd_2}
 T_0(\Psi)
 =-\begin{pmatrix}\dfrac{1}{c}&0\\
 0&\dfrac{1}{c^4}\left(\displaystyle\prod_{k=1}^NT_0(f_{2k})\right)+\dfrac{b}{c^3}\left(\displaystyle\prod_{k=1}^{2N}{a_k}^{k-2N-1}\right)
 \end{pmatrix}
 \begin{pmatrix}q^{2N+1}bcx&\dfrac{1}{c}\\c^2\left(\displaystyle\prod_{k=1}^Nf_{2k}\right)&0\end{pmatrix}
 \Psi.
\end{equation}
The actions of $T_x$ and $T_0$ on the parameters $\{a_0,\dots,a_{2N},b,c,x,q\}$ are given by
\begin{subequations}
\begin{align}
 &T_x:(a_0,\dots,a_{2N},b,c,x,q)\to (a_0,\dots,a_{2N},b,c,q^{-2N-1}x,q),\\
 &T_0:(a_0,\dots,a_{2N},b,c,x,q)\to (a_0,\dots,a_{2N},q^{-2N-1}b,c^{-1},x,q).
\end{align}
\end{subequations}
Note that the action of $T_x$ on the $f$ variables is trivial, and that of $T_0$ is given by \eqref{eqn:dP_odd}.
Therefore, Theorem \ref{theorem:lax_n=odd} follows from \eqref{eqn:correspondence_introEqn_n=odd}, \eqref{eqn:TxT0_Phi_n=odd_1}, \eqref{eqn:TxT0_Phi_n=odd_2} and the following correspondence:
\begin{equation}
 \Psi(x,t)=\Psi,\quad
 \Psi(px,t)=T_x(\Psi),\quad
 \Psi(x,pt)=T_0(\Psi).
\end{equation}
Note that the coefficient matrix in Theorem \ref{theorem:lax_n=odd} is multiplied by $(-1)$ for simplicity.
The following gauge transformation can explain this multiplication:
\begin{equation}
 \Psi\to \dfrac{\Theta(-xt;p)}{\Theta(xt;p)}\Psi,
\end{equation}
where $\Theta(a;p)$ is the modified Jacobi theta function \cite{book_GR2004:MR2128719} satisfying
\begin{equation}
 \cfrac{\Theta(pa;p)}{\Theta(a;p)}=-a^{-1}.
\end{equation}
\section{Concluding remarks}\label{ConcludingRemarks}
In this paper, we have constructed the higher-order $q$-Painlev\'e systems \eqref{eqn:intro_dP_even} and \eqref{eqn:intro_dP_odd}, which include second-order $q$-Painlev\'e equations of $A_6^{(1)}$- and $A_4^{(1)}$-surface type, respectively.
We also obtained their extended affine Weyl group symmetries and Lax pairs.

KNY's representation is well known for giving a high-dimensional extension of $q$-Painlev\'e equations \cite{KNY2002:MR1917133,KNY2002:MR1958118}.
As explained in \S \ref{subsection:Background}, it was extended from $(A_{m-1}\times A_{n-1})^{(1)}$-type to $(A_{m-1}\times A_{n-1}\times A_{g-1})^{(1 )}$-type by using the theory of cluster algebra\cite{MOT2018:AmAnAg,MOT2023:AmAnAgArxiv}.
Similarly, it is expected that the symmetries of the systems \eqref{eqn:intro_dP_even} and \eqref{eqn:intro_dP_odd} can be extended.
From the perspective of this study, we consider that the multi-component extension of CAC systems \cite{ZKZ2020:zbMATH07227205} is effective.
Other approaches based on cluster algebra \cite{MOT2018:AmAnAg,MOT2023:AmAnAgArxiv,OS2020:10.1093/imrn/rnaa283,HI2014:zbMATH06381200} and birational representation of affine Weyl group \cite{KNY2002:MR1917133,KNY2002:MR1958118,PY2021:factorized,MasudaT2015:zbMATH06618811,TT2009:MR2511044} also seem to be highly effective.
Research on this extension is a subject for future study.
\subsection*{Acknowledgment}
I would like to thank Prof. Masatoshi Noumi, Dr. Takao Suzuki, and Dr. Naoto Okubo for fruitful discussions.
This work was supported by JSPS KAKENHI Grant Numbers JP19K14559 and JP23K03145.

\def\cprime{$'$} \def\cprime{$'$}

\end{document}